\documentclass[submission,copyright,creativecommons]{eptcs}
 % Name of the event you are submitting to
\usepackage{breakurl}             % Not needed if you use pdflatex only.

\usepackage[utf8]{inputenc}
\usepackage[T1]{fontenc}

\usepackage{xspace}
\usepackage{wrapfig}

% Math
\usepackage{amsmath}
\usepackage{amsthm}
\usepackage{amssymb}
\usepackage{mathtools}
\usepackage{nicefrac}

% Environments
\theoremstyle{plain}
\newtheorem{theorem}{Theorem}
\newtheorem{lemma}{Lemma}

% Links for references
\usepackage{hyperref}
\usepackage{enumerate}
\usepackage{booktabs}

% Tikz
\usepackage{tikz}
\usepackage{xcolor}
\usepackage{pgfplots}
\usetikzlibrary{automata,calc,patterns,decorations,shapes}
\usetikzlibrary{decorations.pathreplacing}
\usetikzlibrary{decorations.pathmorphing}
\usetikzlibrary{positioning}
\tikzset{initial text={}}

\makeatletter
% alternative stealth arrow. Must be implemented manually since arxiv does not provide tikz 3.0
\pgfarrowsdeclare{stealthnew}{stealthnew}
{
  \ifdim\pgfgetarrowoptions{stealthnew}=-1pt%
    \pgfutil@tempdima=0.28pt%
    \pgfutil@tempdimb=\pgflinewidth%
    \ifdim\pgfinnerlinewidth>0pt%
      \pgfmathsetlength\pgfutil@tempdimb{.6\pgflinewidth-.4*\pgfinnerlinewidth}%
    \fi%
    \advance\pgfutil@tempdima by.3\pgfutil@tempdimb%
  \else%
    \pgfutil@tempdima=\pgfgetarrowoptions{stealthnew}%
    \divide\pgfutil@tempdima by 8%
  \fi%
  \pgfarrowsleftextend{+-3\pgfutil@tempdima}
  \pgfarrowsrightextend{+5\pgfutil@tempdima}
}
{
  \ifdim\pgfgetarrowoptions{stealthnew}=-1pt%
    \pgfutil@tempdima=0.28pt%
    \pgfutil@tempdimb=\pgflinewidth%
    \ifdim\pgfinnerlinewidth>0pt%
      \pgfmathsetlength\pgfutil@tempdimb{.6\pgflinewidth-.4*\pgfinnerlinewidth}%
    \fi%
    \advance\pgfutil@tempdima by.3\pgfutil@tempdimb%
  \else%
    \pgfutil@tempdima=\pgfgetarrowoptions{stealthnew}%
    \divide\pgfutil@tempdima by 8%
    \pgfsetlinewidth{0bp}%
  \fi%
  \pgfpathmoveto{\pgfqpoint{5\pgfutil@tempdima}{0pt}}
  \pgfpathlineto{\pgfqpoint{-3\pgfutil@tempdima}{4\pgfutil@tempdima}}
  \pgfpathlineto{\pgfpointorigin}
  \pgfpathlineto{\pgfqpoint{-3\pgfutil@tempdima}{-4\pgfutil@tempdima}}
  \pgfusepathqfill
}
\makeatother

\pgfsetarrowoptions{stealthnew}{-1pt}
\pgfkeys{/tikz/.cd, arrowhead/.default=-1pt, arrowhead/.code={
  \pgfsetarrowoptions{stealthnew}{#1},
}}

\newcommand{\myquot}[1]{``#1''}
\newcommand{\halfthinspace}{{\kern .08333em}}
\newcommand{\bosy}{\texttt{BoSy}\xspace}

\newcommand{\qbf}{QBF\xspace}

%%%%%%%%%%%%%%%%%%%%%%%%%%%%%%%%%%%%%%%%%%%%%%%%%%%%%%%%%%%%%%%%%%%%%%%%%%%%%%%%
%%%%%%%%%%%%%%%%%%%%%%%%%%%%%%%%%%%%%%%%%%%%%%%%%%%%%%%%%%%%%%%%%%%%%%%%%%%%%%%%
%%%%% Math Symbols

\newcommand{\coloneq}{\mathop{:=}}

\newcommand{\cceq}{\mathop{::=}}

%%%%%%%%%%%%%%%%%%%%%%%%%%%%%%%%%%%%%%%%%%%%%%%%%%%%%%%%%%%%%%%%%%%%%%%%%%%%%%%%
%%%%%%%%%%%%%%%%%%%%%%%%%%%%%%%%%%%%%%%%%%%%%%%%%%%%%%%%%%%%%%%%%%%%%%%%%%%%%%%%
%%%%% Letters
\newcommand{\bigo}{\mathcal{O}}
\newcommand{\smallo}{\ensuremath{o}}

%%%%%%%%%%%%%%%%%%%%%%%%%%%%%%%%%%%%%%%%%%%%%%%%%%%%%%%%%%%%%%%%%%%%%%%%%%%%%%%%
%%%%%%%%%%%%%%%%%%%%%%%%%%%%%%%%%%%%%%%%%%%%%%%%%%%%%%%%%%%%%%%%%%%%%%%%%%%%%%%%
%%%%% Greek Letters
\renewcommand{\epsilon}{\varepsilon}
\renewcommand{\phi}{\varphi}

%%%%%%%%%%%%%%%%%%%%%%%%%%%%%%%%%%%%%%%%%%%%%%%%%%%%%%%%%%%%%%%%%%%%%%%%%%%%%%%%
%%%%%%%%%%%%%%%%%%%%%%%%%%%%%%%%%%%%%%%%%%%%%%%%%%%%%%%%%%%%%%%%%%%%%%%%%%%%%%%%
%%%%% Basic Math
\newcommand{\pow}[1]{2^{#1}}
\newcommand{\nats}{\mathbb{N}}
\newcommand{\card}[1]{|#1|}
\newcommand{\size}[1]{\card{#1}}
\newcommand{\set}[1]{\{ #1 \}}
\newcommand{\opt}{\mathrm{opt}}
%%%%%%%%%%%%%%%%%%%%%%%%%%%%%%%%%%%%%%%%%%%%%%%%%%%%%%%%%%%%%%%%%%%%%%%%%%%%%%%%
%%%%%%%%%%%%%%%%%%%%%%%%%%%%%%%%%%%%%%%%%%%%%%%%%%%%%%%%%%%%%%%%%%%%%%%%%%%%%%%%
%%%%% Automata
\newcommand{\aut}{\mathfrak{A}}

%%%%%%%%%%%%%%%%%%%%%%%%%%%%%%%%%%%%%%%%%%%%%%%%%%%%%%%%%%%%%%%%%%%%%%%%%%%%%%%%
%%%%%%%%%%%%%%%%%%%%%%%%%%%%%%%%%%%%%%%%%%%%%%%%%%%%%%%%%%%%%%%%%%%%%%%%%%%%%%%%
%%%%% Infinite Games

\newcommand{\game}{\mathcal{G}}

\newcommand{\mem}{\mathcal{M}}

\newcommand{\update}{\mathrm{upd}}
\newcommand{\nextmove}{\mathrm{nxt}}

%%%%%%%%%%%%%%%%%%%%%%%%%%%%%%%%%%%%%%%%%%%%%%%%%%%%%%%%%%%%%%%%%%%%%%%%%%%%%%%%
%%%%%%%%%%%%%%%%%%%%%%%%%%%%%%%%%%%%%%%%%%%%%%%%%%%%%%%%%%%%%%%%%%%%%%%%%%%%%%%%
%%%%% PLTL
\newcommand{\true}{\mathbf{tt}}
\newcommand{\false}{\mathbf{ff}}
\newcommand{\F}{{\mathbf{F\,}}}
\newcommand{\G}{{\mathbf{G\,}}}
\newcommand{\Fp}{{\mathbf{F_P\,}}}
\newcommand{\U}{{\mathbf{\,U\,}}}
\newcommand{\X}{{\mathbf{X\,}}}
\newcommand{\R}{{\mathbf{\,R\,}}}

\newcommand{\rel}[0]{\mathrm{rel}}
\newcommand{\realparms}{{\mathcal R}}

%%%%%%%%%%%%%%%%%%%%%%%%%%%%%%%%%%%%%%%%%%%%%%%%%%%%%%%%%%%%%%%%%%%%%%%%%%%%%%%%
%%%%%%%%%%%%%%%%%%%%%%%%%%%%%%%%%%%%%%%%%%%%%%%%%%%%%%%%%%%%%%%%%%%%%%%%%%%%%%%%
%%%%% PLDL

%%%%%%%%%%%%%%%%%%%%%%%%%%%%%%%%%%%%%%%%%%%%%%%%%%%%%%%%%%%%%%%%%%%%%%%%%%%%%%%%
%%%%%%%%%%%%%%%%%%%%%%%%%%%%%%%%%%%%%%%%%%%%%%%%%%%%%%%%%%%%%%%%%%%%%%%%%%%%%%%%
%%%%% Abbrev. for Logics
\newcommand{\ltl}{\mathrm{LTL}}

\newcommand{\pltl}{\mathrm{PLTL}}
\newcommand{\prompt}{\mathrm{PROMPT}$\textendash$\ltl}

\newcommand{\pldl}{\text{PLDL}}
\newcommand{\ldl}{\text{LDL}}

%%%%%%%%%%%%%%%%%%%%%%%%%%%%%%%%%%%%%%%%%%%%%%%%%%%%%%%%%%%%%%%%%%%%%%%%%%%%%%%%
%%%%%%%%%%%%%%%%%%%%%%%%%%%%%%%%%%%%%%%%%%%%%%%%%%%%%%%%%%%%%%%%%%%%%%%%%%%%%%%%
%%%%% complexity classes

\newcommand{\pspace}{{\textsc{Pspace}}}
\newcommand{\twoexp}{{\textsc{2Exptime}}}
\newcommand{\threeexp}{{\textsc{3Exptime}}}

%%%%%%%%%%%%%%%%%%%%%%%%%%%%%%%%%%%%%%%%%%%%%%%%%%%%%%%%%%%%%%%%%%%%%%%%%%%%%%%%
%%%%%%%%%%%%%%%%%%%%%%%%%%%%%%%%%%%%%%%%%%%%%%%%%%%%%%%%%%%%%%%%%%%%%%%%%%%%%%%%
%%%%% customized stuff
\newcommand{\imark}{\ensuremath{i}\xspace}
\newcommand{\omark}{\ensuremath{o}\xspace}
\newcommand{\idelim}{\ensuremath{\#_i}\xspace}
\newcommand{\odelim}{\ensuremath{\#_o}\xspace}
	\newcommand{\pick}{\mathrm{Pick}}

\title{Approximating Optimal Bounds in Prompt-LTL Realizability in Doubly-exponential Time\thanks{Supported by the projects TriCS (ZI 1516/1-1) and AVACS (SFB/TR 14) of the German Research Foundation (DFG) and by the European Research Council (ERC) Grant OSARES (No.\ 683300)}}
\author{Leander Tentrup, Alexander Weinert, and Martin Zimmermann
\institute{Reactive Systems Group, Saarland University, 66123 Saarbrücken, Germany}
\email{\{tentrup, weinert, zimmermann\}@react.uni-saarland.de}}

\begin{document}

\maketitle

%%%%%%%%%%%%%%%%%%%%%%%%%%%%%%%%%%%%%%%%%%%%%%%%%%%%%
%%%%%%%%%%%%%%%%%%%%%%%%%%%%%%%%%%%%%%%%%%%%%%%%%%%%%

\begin{abstract}
We consider the optimization variant of the realizability problem for Prompt Linear Temporal Logic, an extension of Linear Temporal Logic (LTL) by the prompt eventually operator whose scope is bounded by some parameter. In the realizability optimization problem, one is interested in computing the minimal such bound  that allows to realize a given specification. It is known that this problem is solvable in triply-exponential time, but not whether it can be done in doubly-exponential time, i.e., whether it is just as hard as solving LTL realizability. 

We take a step towards resolving this problem by showing that the optimum can be approximated within a factor of two in doubly-exponential time. Also, we report on a proof-of-concept implementation of the algorithm based on bounded LTL synthesis, which computes the smallest implementation of a given specification. In our experiments, we observe a tradeoff between the size of the implementation and the bound it realizes. We investigate this tradeoff in the general case and prove upper bounds, which reduce the search space for the algorithm, and matching lower bounds.
\end{abstract}

%%%%%%%%%%%%%%%%%%%%%%%%%%%%%%%%%%%%%%%%%%%%%%%%%%%%%
%%%%%%%%%%%%%%%%%%%%%%%%%%%%%%%%%%%%%%%%%%%%%%%%%%%%%
\section{Introduction}
\label{section_intro}
The realizability problem for $\prompt$, Linear Temporal Logic ($\ltl$) enriched with an eventually operator of bounded scope, should be treated as an optimization problem: determine the smallest bound on the bounded eventually such that the specification is realizable with respect to that bound. The best exact algorithms for this problem have triply-exponential running times, i.e., they are exponentially slower than algorithms for the decision variant (\myquot{does there exist a bound?}), which is $\twoexp$-complete. We take a step towards resolving the complexity of the optimization problem by presenting an approximation algorithm with doubly-exponential running time  returning a bound that is at most twice the optimum. 

In general, the realizability problem asks to determine the winner in an infinite-duration two-player game played between an input and an output player in rounds~$n = 0,1,2, \ldots$: in each round~$n$, first the input player picks a subset~$i_n$ of a fixed set $I$ of input propositions, then the output player picks a subset~$o_n$ of a fixed set $O$ of output propositions. The output player wins, if the sequence~$(i_0 \cup o_0)
(i_1 \cup o_1)
(i_2 \cup o_2)
\cdots$ of picks satisfies the winning condition, typically a formula~$\phi$ in some logic. A strategy for the output player is a function mapping sequences~$i_0 \cdots i_n \in (\pow{I})^*$ of inputs to an output~$o_n \in \pow{O}$. Such a strategy is winning, if every outcome that is consistent with the strategy satisfies the winning condition. Formally, the realizability problem asks, given a formula~$\phi$, whether the output player has a winning strategy for the realizability game with winning condition~$\phi$. For winning conditions in $\ltl$ (and many extensions), finite-state strategies suffice, i.e., strategies that are implemented by finite automata with outputs. 

$\ltl$~\cite{Pnueli77} is the most prominent logic for specifying reactive systems and the foundations of the $\ltl$ realizability problem are well-understood~\cite{AbadiLamportWolper89,KupfermanPitermanVardi06,KupfermanVardi05,PnueliRosner89,PnueliRosner89a}. Recently, the first tools solving the problem were developed~\cite{BohyBruyereFiliotJinRaskin,Ehlers11e,FiliotJinRaskin11,FinkbeinerSchewe13,JobstmannBloem06}, which show promising performance despite the prohibitive worst-case complexity. However, $\ltl$ lacks the ability to express time-bounds, e.g., the formula~$\G(q \rightarrow \F p)$ expresses that every request~$q$ has to be responded to by a response~$p$. However, it does \emph{not} require a bound on the waiting times between requests and responses, i.e., it is even satisfied if the waiting times diverge. Several parameterized logics where introduced to overcome this shortcoming~\cite{AlurEtessamiLaTorrePeled01,FaymonvilleZimmermann14,KupfermanPitermanVardi09,Zimmermann15a}. Here, we focus on the smallest such logic: $\prompt$, which extends $\ltl$ by the prompt eventually operator~$\Fp$, whose semantics are defined with respect to a given bound~$k$. For example, the formula ~$\G(q \rightarrow \Fp p)$ is satisfied with respect to $k$, if every request is responded to within at most $k$ steps. In decision problems for this logic the bound is typically quantified existentially, e.g., the realizability problem asks for a given formula~$\phi$ whether there exists a bound~$k$ such that the output player has a winning strategy for the realizability game where the winning condition~$\phi$ is evaluated with respect to $k$.

Kupferman et al.\ showed that $\prompt$ has the same desirable algorithmic properties as $\ltl$. In particular, model checking is $\pspace$-complete and realizability is $\twoexp$-complete~\cite{KupfermanPitermanVardi09}. Hence, one can add the prompt eventually operator to $\ltl$ for free. However, as already noticed by Alur et al.\ in their work on Parametric $\ltl$~\cite{AlurEtessamiLaTorrePeled01} (which also contains the dual of the prompt eventually and allows for multiple bounds), one can view decision problems for parameterized logics as optimization problems: instead of asking for the existence of some bound, one searches for an optimal one. They showed that the model checking optimization problem for unipolar $\pltl$ specifications, which includes $\prompt$, can be solved in polynomial space~\cite{AlurEtessamiLaTorrePeled01}. Thus, even finding optimal bounds is not harder than solving the $\ltl$ model checking problem. However, for $\prompt$ realizability, or equivalently, for infinite games, the situation is different: while the decision problem is known to be $\twoexp$-complete~\cite{KupfermanPitermanVardi09}, the best algorithm for the optimization problem has triply-exponential running time~\cite{Zimmermann13}. 
 
\subsection{Our Contributions}

We show that relaxing the optimality requirement on the bound allows to recover doubly-exponential running times: an approximately optimal bound can be determined using the alternating color technique, which was introduced by Kupferman et al.\ to solve the decision problems for $\prompt$. To this end, we present an approximation algorithm with doubly-exponential running time with an approximation ratio of two. The algorithm has to solve at most doubly-exponentially many $\ltl$ realizability problems, each solvable in doubly-exponential time. We present the algorithm for $\prompt$, but it is applicable to stronger parameterized extensions of $\ltl$ like parametric $\ltl$~\cite{AlurEtessamiLaTorrePeled01} and parametric $\ldl$~\cite{FaymonvilleZimmermann14}. 

In many situations, approximating the optimal bound is sufficient, since the exact optimum depends on the granularity of the realizability problem at hand. This is even more true if the optimization problem indeed turns out to be harder than the decision variant, e.g., if it is $\threeexp$-hard. Then, the loss in quality is made up for by significant savings in running time. On the other hand, if the optimal bound is at most exponential in the size of the formula, then it can be exactly determined in doubly-exponential time~\cite{Zimmermann13}: the bound can be hardwired into a non-deterministic automaton capturing the specification, which has to be determinized to solve the realizability problem. This involves an exponential blow-up, which implies that this approach only yields a doubly-exponential time algorithm, if the bound is at most exponential. 

Furthermore, we report on a proof-of-concept implementation of our algorithm. To handle the solution of the $\ltl$ realizability problems, we rely on the framework of bounded synthesis~\cite{FinkbeinerSchewe13}, which searches for a minimal-size finite-state winning strategy for a given specification. The evaluation of this implementation shows that, while it suffers from a significant increase in running time compared to $\ltl$ realizability, synthesis of prompt arbiters for a small number of clients is feasible.

In our experiments, a tradeoff between size and quality (measured in the bound on the prompt eventually operators) of winning strategies becomes apparent: one can trade size of the strategy for quality and vice versa. We conclude by studying this tradeoff in depth. First, we show that fixing the size of the strategy to $n$ (as it is done during bounded synthesis) implies an exponential upper bound (in $n$) on the sufficient bound~$k$ on the prompt eventually operators. This upper bound reduces the search space of our algorithm. The upper bound is then matched by a tight lower bound. Secondly, we present a family of formulas exhibiting a continuous tradeoff between size and quality with exponential extremal values, i.e., the specifications are realizable with exponential size and a linear bound or with constant size and an exponential bound and the tradeoff between these two points is continuous. Thirdly, by giving up the continuity, one can show even stronger tradeoffs: there is a family of specifications that is realizable with doubly-exponential size and bound zero or with size one and an exponential bound. 

%%%%%%%%%%%%%%%%%%%%%%%%%%%%%%%%%%%%%%%%%%%%%%%%%%%%%
%%%%%%%%%%%%%%%%%%%%%%%%%%%%%%%%%%%%%%%%%%%%%%%%%%%%%
\section{Definitions}
\label{section_defs}
Throughout this work, fix a finite set~$P$ of atomic propositions and denote the non-negative integers by~$\nats$.

\subsection{Prompt-LTL}
The formulas of $\prompt$ are given by the grammar
\begin{equation*}\phi \cceq p \mid \neg p \mid \phi \wedge \phi \mid \phi \vee
\phi \mid \X \phi \mid \phi \U \phi \mid \phi \R \phi \mid \Fp \phi
  ,\end{equation*}
where $p \in P$ represents an atomic proposition. Also, we use the standard shorthands~$\F \phi = \true \U \phi$ and $\G \phi = \false \R \phi$ with $\true = p \vee \neg p$ and $\false = p \wedge \neg p$, where $p$ is a fixed atomic proposition.
Furthermore, we use $\phi \rightarrow \psi$ as shorthand for $\neg \phi \vee \psi$,
if the antecedent~$\phi$ is a (negated) atomic
proposition (where we identify $\neg \neg a$ with $a$). 
We define the size~$\size{\phi}$ of $\phi$ to be the number of subformulas of $\phi$.

In order to evaluate $\prompt$ formulas, we need to fix a bound~$k \in \nats$ to evaluate the prompt eventually operator. Hence, the satisfaction 
relation is defined for an $\omega$-word~$w \in \left( \pow{P} \right)^{ \omega }$, a
position~$n$ of $w$, a bound~$k$, and a~$\prompt$ formula. The definition is standard for the classical operators and defined as follows for the prompt eventually:
\begin{itemize}

\item $(w,n,k)\models\Fp\phi$ if and only if there exists a $j$ in
the range~$0\le j \le k$ such that $(w,n+j,k)\models\phi$.

\end{itemize}
For the sake of brevity, we write $(w,k) \models \phi$ instead of
$(w,0,k) \models \phi$ and say that $w$ is a model of $\phi$ with
respect to $k$.
If $(w, k) \models \phi$, we say that $w$ models $\phi$ with respect to $k$.
Note that $\phi$ is an $\ltl$ formula~\cite{Pnueli77}, if it does not contain the prompt eventually. In this case, we write $w \models \phi$.

\subsection{Prompt-LTL Realizability}
\label{subsection_realizability}
Throughout this subsection, we fix a partition~$(I,O)$ of $P$. An instance of the $\prompt$ realizability problem over $(I,O)$ consists of an $\prompt$ formula~$\phi$ over $P = I \cup O$ and asks to determine the winner in the following game, played between Player~$I$ and Player~$O$ in rounds~$n = 0, 1, 2, \ldots$: in round~$n$, Player~$I$ picks $i_n \subseteq I$ and afterwards Player~$O$ picks $o_n \subseteq O$. The resulting play is $(i_0 \cup o_0)
(i_1 \cup o_1)
(i_2 \cup o_2)
\cdots \in (\pow{P})^\omega$. 

A strategy for Player~$O$ is a mapping $\sigma \colon (\pow{I})^+ \rightarrow \pow{O}$. A play as above is consistent with $\sigma$, if $o_n = \sigma(i_0 \cdots i_n)$ for every $n$. We say that $\sigma$ realizes $\phi$  with respect to $k \in \nats$, if every play that is consistent with $\sigma$ satisfies $\phi$ with respect to $k$. Formally, the $\prompt$ realizability problem asks, given a $\prompt$ formula~$\phi$, whether there is a strategy $\sigma$ and a $k$ such that $\sigma$ realizes $\phi$ with respect to $k$. In this case, we say $\phi$ is realizable.

A memory structure $\mem = (M, m_0, \update)$ consists of a finite set of states $M$, an initial state $m_0 \in M$, and an update function $\update\colon M \times 2^I \rightarrow M$.
We extend the update function to finite input sequences as usual, i.e., we define $\update^* \colon (2^I)^* \rightarrow M$ inductively as $\update^*(\epsilon) = m_0$ and $\update^*(wi) = \update(\update^*(w), i)$ for $w \in (\pow{I})^*$ and $i \in \pow{I}$.
A memory structure $\mem$ together with a next-move function $\nextmove\colon M \times 2^I \rightarrow 2^O$ induces a strategy $\sigma$ defined as $\sigma(i_0\cdots i_n) = \nextmove(\update^*(i_0 \cdots i_{n-1}), i_n)$.
We say that such a memory structure implements the strategy $\sigma$.
We call any strategy $\sigma$ that can be implemented by some memory structure a finite-state strategy.
The size of a finite-state strategy is the size of the smallest memory structure implementing it.

The $\ltl$ realizability problem is defined by restricting the specifications~$\phi$ to
$\ltl$ formulas and is $\twoexp$-complete~\cite{PnueliRosner89a}.
Kupferman et al. showed that $\prompt$ realizability is not harder.

\begin{theorem}[\cite{KupfermanPitermanVardi09}]
\label{thm_prompt}
The $\prompt$ realizability problem is $\twoexp$-com\-plete. Furthermore, if $\phi$ is realizable with respect to some $k$, then also with respect to some $k \in \bigo(2^{2^{\size{\phi}}})$ by some finite-state strategy of size~$\bigo(2^{2^{\card{\phi}}})$.
\end{theorem}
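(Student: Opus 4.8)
The plan is to prove both parts simultaneously via the \emph{alternating color technique}, reducing $\prompt$ realizability to $\ltl$ realizability of a formula of size $\bigo(\size\phi)$. I would introduce a fresh atomic proposition~$p \notin P$ and hand its control to Player~$O$, so we consider $\ltl$ games over the partition~$(I, O \cup \set p)$. Along a play, $p$ induces a $2$-coloring of the positions: say a \emph{color change} occurs between rounds~$n$ and $n+1$ if $p$ holds at exactly one of them, and call a maximal color-change-free infix a \emph{block}. I would then fix the relativization $\rel_p(\cdot)$ sending a $\prompt$ formula to an $\ltl$ formula over $P \cup \set p$: it acts homomorphically on the Boolean connectives, $\X$, $\U$, and $\R$, while $\rel_p(\Fp\psi)$ expresses \myquot{$\rel_p(\psi)$ holds at some later position before the second color change occurs}, i.e.\ within the current or the next block. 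Since a subformula is counted only once, $\size{\rel_p(\phi)} = \bigo(\size\phi)$ regardless of the nesting of~$\Fp$.

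The technical heart is a \emph{block lemma} relating the two semantics under colorings with short blocks, proved by induction on~$\phi$, with only the $\Fp$ case non-trivial. In the direction needed for soundness: if $w$ is a colored word all of whose blocks have at most~$N$ positions and $(w,0) \models \rel_p(\phi)$ as an $\ltl$ formula, then $(w|_P, 0, 2N) \models \phi$, because from any position the current block together with the next one spans fewer than~$2N$ positions. In the converse direction, needed for completeness: if $(w', 0, k) \models \phi$, then equipping~$w'$ with the coloring whose blocks all have exactly $k+1$ positions yields a colored word~$w$ with $(w, 0) \models \rel_p(\phi)$, because any run of at most $k+1$ consecutive positions contains at most one color change.

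With this in hand, the reduction maps~$\phi$ to the $\ltl$ realizability instance over~$(I, O \cup \set p)$ with winning condition $\rel_p(\phi) \wedge \G\F(\text{color change})$; the second conjunct is essential, as otherwise Player~$O$ could freeze the color and collapse $\Fp$ into~$\F$. For \emph{completeness}: if~$\sigma$ realizes~$\phi$ with respect to some~$k$, augment it with a modulo-$(k+1)$ counter on the round index that drives~$p$; the converse block lemma shows the augmented strategy wins the $\ltl$ game. For \emph{soundness}: by the standard bounds for $\ltl$ synthesis, a won $\ltl$ game admits a finite-state winning strategy~$\sigma'$ implemented by a memory structure~$\mem$ of doubly-exponential size~$N$ in $\size{\rel_p(\phi)} = \bigo(\size\phi)$. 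A pumping argument then bounds every block along every play consistent with~$\sigma'$ by~$N$ positions: a longer color-change-free stretch would contain a repeated state of~$\mem$, and iterating the enclosed input loop forever produces a play consistent with~$\sigma'$ that changes color only finitely often, contradicting that~$\sigma'$ enforces $\G\F(\text{color change})$. The first block lemma then shows that the $P$-projection of~$\sigma'$ — whose size is at most~$N$ — realizes~$\phi$ with respect to~$2N$, again doubly-exponential in~$\size\phi$. Since the syntactic reduction is polynomial, membership in~$\twoexp$ follows; $\twoexp$-hardness is inherited from $\ltl$ realizability, as every $\ltl$ formula is a $\prompt$ formula whose realizability does not depend on the bound.

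I expect the soundness pumping argument to be the main obstacle to make rigorous: one must pin down precisely which stretch is monochromatic, check that the memory-state repetition found inside it yields an input loop along whose iteration~$p$ genuinely never flips (the wrap-around transition back to the start of the loop must be handled), and verify that the resulting infinite play is indeed consistent with the input-reading strategy~$\sigma'$, so that the contradiction with $\G\F(\text{color change})$ is legitimate. The block-lemma induction and the completeness direction are routine once $\rel_p$ is stated precisely; the only mild care there is to carry the weakened bound~$2N$ through the induction hypothesis, so that the homomorphic cases pass it along untouched.
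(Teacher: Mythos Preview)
The paper does not prove this theorem; it is quoted from \cite{KupfermanPitermanVardi09} as background. Your proposal correctly reconstructs the original alternating-color argument from that reference, and the ingredients you use---the relativization, the block lemma, and the pumping bound on block length for a finite-state winning strategy---are precisely what the paper records (also without proof) as Lemma~\ref{lemma_alternatingcolor} and Lemma~\ref{lemma_altcolorgames}. The only cosmetic difference is that the paper, aiming at the approximation algorithm rather than the decision problem, phrases the reduction with an explicit block-size bound~$\psi_k$ instead of your liveness conjunct~$\G\F(\text{color change})$; for the decision problem your formulation is the standard one from \cite{KupfermanPitermanVardi09}, and your soundness-via-pumping sketch is correct.
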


Furthermore, the doubly-exponential upper bounds on the necessary $k$ and on the memory requirements are tight. Also, if $\phi$ is realizable with respect to some $k$, then also with respect to every $k'>k$.

\subsection{The Alternating Color Technique}
Our algorithm presented in the next section is based on an application of Kupferman et al.'s alternating color technique \cite{KupfermanPitermanVardi09} to $\prompt$ realizability. We recall the technique in this subsection.

Let $p\notin P$ be a fixed fresh proposition. An
$\omega$-word~$w'\in\left(2^{P\cup\{p\}}\right)^{\omega}$ is a $p$-coloring of
$w\in\left(2^{P}\right)^{\omega}$ if $w_n'\cap P=w_n$, i.e., $w_n$ and $w_n'$
coincide on all propositions in $P$. We say
that a position is a change point, if $n=0$ or if the truth value of $p$ at positions~${n-1}$
and $n$ differs. A $p$-block is an infix~$w_m' \cdots w_{n}'$ of $w'$ such that $m$ and $n+1$ are adjacent change points. Let $k \ge 1$: we say
that $w'$ is $k$-spaced, if the truth value of $p$ changes infinitely often and each
$p$-block has length at least $k$; we say that $w'$ is $k$-bounded, if each
$p$-block has length at most $k$ (which implies that the truth value of $p$ changes
infinitely often).

Given a $\prompt$ formula~$\phi$, $\rel(\phi)$ denotes the formula obtained by inductively replacing
every subformula~$\Fp\psi$ by
\begin{equation*}
(p\rightarrow (p\U(\neg p\U\rel(\psi))))\wedge(\neg p\rightarrow (\neg p\U(
p\U\rel(\psi)))).
\end{equation*}
Intuitively, instead of requiring $\psi$ to be satisfied within a bounded number of steps, $\rel(\phi)$ requires it to be satisfied within at most one change point. The relativization~$\rel(\phi)$ is an $\ltl$ formula of size~$\bigo(\card{\phi})$. Kupferman et al.\ showed that $\phi$ and $\rel(\phi)$ are ``equivalent'' on $\omega$-words
which are bounded and spaced.

\begin{lemma}[\cite{KupfermanPitermanVardi09}]
\label{lemma_alternatingcolor}
Let $\phi$ be a $\prompt$ formula.
\begin{enumerate}
\item \label{lemma_alternatingcolor_prompttoltl}
If $(w,k)\models \phi$, then $w' \models \rel(\phi)$ for
every $k$-spaced $p$-coloring~$w'$ of $w$.

\item \label{lemma_alternatingcolor_ltltoprompt}
Let $k\in\nats$. If $w'$ is a $k$-bounded $p$-coloring of $w$ such that
$w' \models \rel(\phi)$, then $(w,2k)\models\phi$.
\end{enumerate}
\end{lemma}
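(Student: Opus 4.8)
The plan is to prove both parts by induction over the structure of $\phi$, after strengthening the statements to hold at every position: for part~\ref{lemma_alternatingcolor_prompttoltl} we show that $(w,n,k)\models\phi$ implies $(w',n)\models\rel(\phi)$ for every position~$n$ and every $k$-spaced $p$-coloring~$w'$ of~$w$, and for part~\ref{lemma_alternatingcolor_ltltoprompt} that $(w',n)\models\rel(\phi)$ implies $(w,n,2k)\models\phi$ for every~$n$, whenever $w'$ is a $k$-bounded $p$-coloring of~$w$. This strengthening is necessary because the temporal operators shift the point of evaluation, and it is unproblematic because $\rel$ is defined compositionally. Since $\rel$ commutes with every connective other than the prompt eventually (e.g.\ $\rel(\X\psi)=\X\rel(\psi)$ and $\rel(\psi_1\U\psi_2)=\rel(\psi_1)\U\rel(\psi_2)$) and $w$ and $w'$ agree on all propositions in~$P$, the atomic and Boolean cases are immediate, and the cases for $\X$, $\U$ and $\R$ follow directly by applying the induction hypothesis at the appropriate positions. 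Hence the whole argument comes down to the single case $\phi=\Fp\psi$.

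For part~\ref{lemma_alternatingcolor_prompttoltl}, assume $(w,n,k)\models\Fp\psi$, pick $j\le k$ with $(w,n+j,k)\models\psi$, and obtain $(w',n+j)\models\rel(\psi)$ from the induction hypothesis. Suppose $p$ holds at position~$n$ of~$w'$; the case where $\neg p$ holds there is symmetric and uses the second conjunct of $\rel(\Fp\psi)$. Let $B_1$ be the $p$-block of~$w'$ containing~$n$ and $B_2$ the following $\neg p$-block (which exists, as $k$-spacedness forces the color to change infinitely often). Since $w'$ is $k$-spaced, both $B_1$ and $B_2$ have length at least~$k$, so every position in $\{n,n+1,\dots,n+k\}$ lies in $B_1\cup B_2$; in particular $n+j$ does. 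If $n+j\in B_1$, then $\rel(\psi)$ holds after a $p$-stretch followed by an empty $\neg p$-stretch; if $n+j\in B_2$, then it holds after the $p$-stretch exhausting the remainder of~$B_1$ followed by the $\neg p$-stretch running from the start of~$B_2$ up to~$n+j$. In both cases $(w',n)\models p\U(\neg p\U\rel(\psi))$, which is exactly the first conjunct of $\rel(\Fp\psi)$, while the second conjunct holds vacuously since $p$ holds at~$n$.

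For part~\ref{lemma_alternatingcolor_ltltoprompt}, assume $(w',n)\models\rel(\Fp\psi)$ and again suppose $p$ holds at~$n$, so $(w',n)\models p\U(\neg p\U\rel(\psi))$. Unfolding the two until-operators yields a position~$m'\ge n$ with $(w',m')\models\rel(\psi)$ that is reached by first remaining in the $p$-block~$B_1$ of~$n$ and then, optionally, in the following $\neg p$-block~$B_2$; that is, $m'$ lies in $B_1\cup B_2$ (at most the first position of the block after~$B_2$). Since $w'$ is $k$-bounded, $\card{B_1}\le k$ and $\card{B_2}\le k$, hence $m'-n\le\card{B_1}+\card{B_2}\le 2k$. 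Writing $m'=n+j$ with $0\le j\le 2k$, the induction hypothesis gives $(w,n+j,2k)\models\psi$, and therefore $(w,n,2k)\models\Fp\psi$. The factor of two is precisely the cost of a single change point separating two blocks that may each have length up to~$k$.

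The only genuine difficulty is the bookkeeping for the block structure in the prompt-eventually case: one must argue that the nesting of $p$-until and $\neg p$-until in $\rel(\Fp\psi)$ confines the witnessing position to the current block and its successor, and then combine this with the spacing, respectively boundedness, hypothesis to translate between \myquot{within $k$ steps} and \myquot{within one change point}. Everything else is a routine structural induction.
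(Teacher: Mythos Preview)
Your proof is correct. Note that the paper does not actually prove this lemma: it is quoted from Kupferman, Piterman, and Vardi~\cite{KupfermanPitermanVardi09} and stated without proof, so there is no in-paper argument to compare against. Your structural-induction argument, with the strengthening to arbitrary positions and the block-based case analysis for~$\Fp$, is the standard proof of this result and matches the approach in the original reference. One minor point of phrasing: in part~\ref{lemma_alternatingcolor_prompttoltl} you justify that $n+j\in B_1\cup B_2$ by saying both blocks have length at least~$k$; in fact only $\card{B_2}\ge k$ is needed (since $n$ may sit at the very end of~$B_1$), but the conclusion is unaffected.
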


%%%%%%%%%%%%%%%%%%%%%%%%%%%%%%%%%%%%%%%%%%%%%%%%%%%%%
%%%%%%%%%%%%%%%%%%%%%%%%%%%%%%%%%%%%%%%%%%%%%%%%%%%%%
\section{Approximating Optimal Bounds in Doubly-Exponential Time}
\label{section_approx}
Determining whether a $\prompt$ formula~$\phi$ is realizable with respect to some $k$ induces a natural optimization problem: determine the smallest such $k$. The optimum (and a strategy realizing $\phi$ with respect to the optimum) can be computed in triply-exponential time~\cite{Zimmermann13}.

However, it is an open problem whether the optimization problem can be solved in doubly-exponential time, i.e., whether optimal $\prompt$ realizability is no harder than $\ltl$ realizability. We take a step towards resolving the problem by showing that the optimum can be approximated within a factor of two in doubly-exponential time. 

The alternating color technique is applied to the $\prompt$ realizability problem by replacing $\phi$ by its relativization~$\rel(\phi)$ and by letting Player~$O$ determine the truth value of the distinguished proposition~$p$ for every position by adding it to the output propositions~$O$. The full details are explained in~\cite{KupfermanPitermanVardi09}, where the following statements are shown to prove the application of the alternating color technique to be correct. Here, $\psi_k$ is an $\ltl$ formula of linear size in $k$ that characterizes $k$-boundedness, i.e., $w' \models \psi_k$ if, and only if, $w'$ is a $k$-bounded $p$-coloring. 

\begin{lemma}[\cite{KupfermanPitermanVardi09}]
\label{lemma_altcolorgames}
Let $\phi$ be a $\prompt$ formula and let $k \in \nats$.
\begin{enumerate}

\item\label{lemma_altcolorgames_prompt2ltl}
 A strategy realizing $\phi$ with respect to $k$ can be turned into a strategy realizing $\rel(\phi) \wedge \psi_k$.

\item\label{lemma_altcolorgames_ltl2prompt}
A strategy realizing $\rel(\phi) \wedge \psi_k$ can be turned into a strategy realizing $\phi$ with respect to $2k$.

\end{enumerate}
\end{lemma}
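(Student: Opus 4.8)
The plan is to derive Lemma~\ref{lemma_altcolorgames} directly from Lemma~\ref{lemma_alternatingcolor}, using the fact that realizability games are determined position-by-position and that the alternating color technique only augments the output alphabet with the fresh proposition~$p$. Throughout, let $(I,O)$ be the partition for~$\phi$ and let $(I, O \cup \set{p})$ be the partition for the $\ltl$ game with winning condition $\rel(\phi) \wedge \psi_k$; note that in both games Player~$I$ picks from the same set~$I$, so a strategy in one game can be fed the same input sequences as a strategy in the other.

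For part~\ref{lemma_altcolorgames_prompt2ltl}, suppose $\sigma$ realizes $\phi$ with respect to~$k$. I would define a strategy~$\sigma'$ in the augmented game that mimics~$\sigma$ on the $O$-propositions and, additionally, maintains a counter modulo~$k$ driving the truth value of~$p$: $p$ is set to hold for exactly $k$ consecutive positions, then to fail for exactly $k$ consecutive positions, and so on. This requires only an extra memory component of size~$\bigo(k)$ on top of whatever implements~$\sigma$, so $\sigma'$ is still finite-state if $\sigma$ is. Every outcome~$w'$ consistent with~$\sigma'$ is, by construction, a $k$-bounded (indeed, exactly-$k$-spaced) $p$-coloring of the corresponding outcome~$w$ consistent with~$\sigma$; hence $w' \models \psi_k$, and since $w$ is $k$-spaced as well, Lemma~\ref{lemma_alternatingcolor}\ref{lemma_alternatingcolor_prompttoltl} applied with this $w'$ gives $w' \models \rel(\phi)$ (using that $(w,k) \models \phi$ by assumption). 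Thus every outcome of~$\sigma'$ satisfies $\rel(\phi) \wedge \psi_k$, as required.

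For part~\ref{lemma_altcolorgames_ltl2prompt}, suppose $\sigma'$ realizes $\rel(\phi) \wedge \psi_k$ in the augmented game. Define $\sigma$ in the original game by projecting away the $p$-component of each move of~$\sigma'$; formally, $\sigma(i_0\cdots i_n)$ is $\sigma'(i_0 \cdots i_n)$ restricted to~$O$, and if $\sigma'$ is implemented by a memory structure then so is~$\sigma$ over the same state set. Given any outcome~$w$ consistent with~$\sigma$, there is a matching outcome~$w'$ consistent with~$\sigma'$ (feeding $\sigma'$ the same inputs) that is a $p$-coloring of~$w$; since $w' \models \psi_k$, this $p$-coloring is $k$-bounded, and since $w' \models \rel(\phi)$, Lemma~\ref{lemma_alternatingcolor}\ref{lemma_alternatingcolor_ltltoprompt} yields $(w, 2k) \models \phi$. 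As $w$ was arbitrary, $\sigma$ realizes $\phi$ with respect to~$2k$.

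The only point needing a little care is the bookkeeping that the input sequences Player~$I$ may produce in the two games coincide, so that outcomes really do correspond to each other under the projection or the $p$-augmentation; this is immediate because the alternating color construction leaves~$I$ untouched and all of the content of the argument is then carried by Lemma~\ref{lemma_alternatingcolor}. I do not expect a genuine obstacle here — the lemma is essentially a translation of the word-level statement of Lemma~\ref{lemma_alternatingcolor} to the game setting, and the main thing to get right is that the $p$-coloring exhibited in each direction has the boundedness/spacedness property demanded by the corresponding clause of Lemma~\ref{lemma_alternatingcolor}.
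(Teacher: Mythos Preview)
The paper does not actually prove Lemma~\ref{lemma_altcolorgames}; it is quoted from~\cite{KupfermanPitermanVardi09} with the remark that ``the full details are explained in~\cite{KupfermanPitermanVardi09}.'' Your argument is the standard one and is correct: in each direction you exhibit the right $p$-coloring (periodic with period~$k$ in part~\ref{lemma_altcolorgames_prompt2ltl}, the one produced by $\sigma'$ in part~\ref{lemma_altcolorgames_ltl2prompt}) and invoke the matching clause of Lemma~\ref{lemma_alternatingcolor}. One small slip: in part~\ref{lemma_altcolorgames_prompt2ltl} you write ``since $w$ is $k$-spaced'' where you mean~$w'$ (the uncolored word~$w$ has no $p$-blocks). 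You may also want to note that the construction tacitly assumes $k \ge 1$, since $k$-spacedness and $k$-boundedness are only defined for $k \ge 1$ in the paper; the case $k = 0$ is degenerate and can be handled separately (or absorbed into $k = 1$).
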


Also, if $k$ is not \emph{too large}, we can check the realizability of $\rel(\phi) \wedge \psi_k$ in doubly-exponential time.

\begin{lemma}
\label{lemma_altcolorgames_complexity}
 The following problem is in $\twoexp$: Given a $\prompt$ formula~$\phi$ and a natural number~$k$ that is at most doubly-exponential in $\size{\phi}$, is $\rel(\phi) \wedge \psi_k$ realizable? Furthermore, one can compute a strategy realizing the formula (if one exists) in doubly-exponential time.
\end{lemma}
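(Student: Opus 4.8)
The plan is to bound the size of a deterministic automaton (or game arena) for the $\ltl$ formula $\rel(\phi) \wedge \psi_k$ and then invoke the known doubly-exponential algorithm for $\ltl$ realizability over that arena, checking carefully that the dependence on $k$ does not push us past doubly-exponential. First I would recall that $\rel(\phi)$ is an $\ltl$ formula of size $\bigo(\size{\phi})$ (as stated in the excerpt) and that $\psi_k$ is an $\ltl$ formula of size linear in $k$. Hence $\rel(\phi) \wedge \psi_k$ is an $\ltl$ formula of size $\bigo(\size{\phi} + k)$. Applying the standard $\ltl$-to-deterministic-parity-automaton construction yields an automaton of size doubly-exponential in $\bigo(\size{\phi}+k)$, i.e.\ of size $2^{2^{\bigo(\size{\phi}+k)}}$ with a number of colors exponential in $\size{\phi}+k$.

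The crux is the hypothesis that $k$ is \emph{at most} doubly-exponential in $\size{\phi}$, say $k \le 2^{2^{c\,\size{\phi}}}$ for some constant $c$. Plugging this into the bound above gives an automaton of size $2^{2^{\bigo(\size{\phi} + 2^{2^{c\,\size{\phi}}})}} = 2^{2^{2^{\bigo(\size{\phi})}}}$ — which is \emph{triply}-exponential, so the naive route fails. The fix, and the main technical point I would stress, is to avoid encoding $\psi_k$ as an $\ltl$ formula that is then blown up by determinization. Instead, I would build the deterministic automaton for $\rel(\phi)$ alone — size $2^{2^{\bigo(\size{\phi})}}$, doubly-exponential, with $2^{\bigo(\size{\phi})}$ colors — and take its product with a \emph{small deterministic} automaton (in fact a deterministic safety/co-Büchi automaton) that directly tracks $k$-boundedness of the $p$-coloring: it only needs a counter modulo $k{+}1$ together with a flag recording whether a change point has occurred within the last $k$ steps, hence $\bigo(k)$ states. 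Since $k$ is at most doubly-exponential in $\size{\phi}$, this factor $\bigo(k)$ is absorbed into the doubly-exponential bound on the product automaton; the acceptance condition of the product is still a parity condition with $2^{\bigo(\size{\phi})}$ colors.

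Having a deterministic parity game arena of doubly-exponential size with exponentially many colors, solving the realizability game — determining the winner and extracting a finite-state winning strategy — takes time polynomial in the arena size and (sub)exponential in the number of colors, hence doubly-exponential in $\size{\phi}$ overall; this also yields a winning strategy of doubly-exponential size computable in doubly-exponential time. Combined with Lemma~\ref{lemma_altcolorgames}, a strategy realizing $\rel(\phi)\wedge\psi_k$ — equivalently, one winning the product game — exists iff $\phi$ is realizable with respect to $k$ in the alternating-color sense, and from it we recover a strategy realizing $\phi$ with respect to $2k$. The main obstacle is precisely the one flagged above: one must resist the temptation to treat $\psi_k$ as just another conjunct of the $\ltl$ formula, because that incurs a second determinization blow-up; routing the $k$-boundedness check through a direct deterministic counter automaton and only then taking the product keeps everything doubly-exponential. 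The remaining steps — the $\ltl$-to-parity construction, the product, and parity-game solving — are standard, so I would not belabor the constants.
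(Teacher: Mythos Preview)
Your proposal is correct and follows essentially the same approach as the paper: build a deterministic parity automaton for $\rel(\phi)$ of doubly-exponential size with exponentially many colors, take its product with a deterministic safety automaton of size $\bigo(k)$ for $k$-boundedness, and solve the resulting parity game. The paper's proof is terser---it does not spell out the failure of the naive route or the final appeal to Lemma~\ref{lemma_altcolorgames}---but the substantive argument is the same.
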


\begin{proof}
As usual, we reduce the problem to a parity game (see \cite{GraedelThomasWilke02} for background). First, we construct a deterministic parity automaton recognizing the language~$\set{ \rho \in (\pow{P \cup \set{p}})^\omega \mid \rho \models \rel(\phi)}$ and intersect it with a deterministic safety automaton that recognizes 
$\set{ \rho \in (\pow{P \cup \set{p}})^\omega \mid \text{$\rho \models \psi_k$}}$.
It is known that the first automaton is of doubly-exponential size and has exponentially many colors (both in $\size{\phi}$) while the second one is of linear size in $k$. Thus, the deterministic parity automaton~$\aut$ recognizing the intersection is of doubly-exponential size in $\size{\phi}$ and linear size in $k$ and has exponentially many colors in $\size{\phi}$. 

Next, we split a transition of $\aut$ labeled by $A \subseteq P \cup \set{p}$ into two, the first one labeled by $A \cap I$ and the second one by $A \setminus I$. By declaring the original states of $\aut$ to be Player~$I$ states and the new intermediate states obtained by splitting the transitions to be Player~$O$ states, we obtain a parity game that is won by Player~$O$ from the initial state of $\aut$ if, and only if, $\rel(\phi) \wedge \psi_k$ is realizable. Additionally, a winning strategy for Player~$O$ in the parity game can be turned into a strategy realizing $\rel(\phi) \wedge \psi_k$.
This parity game is of doubly-exponential size with exponentially many colors, both in $\size{\phi}$. The winner and a winning strategy for her in such a game can be computed in doubly-exponential time~\cite{Schewe07}. 
\end{proof}

Now, we are able to present the algorithm for approximating optimal bounds for $\prompt$ realizability. Given an input $\phi$, the algorithm first checks whether $\phi$ is realizable with respect to some $k$. If not, then the optimum is $\infty$ by convention. Otherwise, Theorem~\ref{thm_prompt} yields a doubly-exponential upper bound~$u$ on the optimum. Now, the algorithm determines the smallest~$1 \le k \le u$ such that $\rel(\phi) \wedge \psi_k$ is realizable and returns $2k$. The emptiness test and determining the realizability of $\rel(\phi) \wedge \psi_k$ can be executed in doubly-exponential time as shown in Theorem~\ref{thm_prompt} and Lemma~\ref{lemma_altcolorgames_complexity}. As the latter problem has to be solved at most doubly-exponentially often\footnote{With binary search, this can be improved to exponentially often. However, the running time of the realizability check depends on $k$, which is typically small. Thus, traversing the search space $0,1,\ldots,u$ in the natural order is more beneficial. We discuss the search strategy in more detail in Section~\ref{section_experiments}.}, the overall running time is doubly-exponential as well. Furthermore, due to Lemma~\ref{lemma_altcolorgames_complexity} and Lemma~\ref{lemma_altcolorgames}.\ref{lemma_altcolorgames_ltl2prompt}, we even obtain a strategy realizing $\phi$ with respect to $2k$.

It remains to argue that the algorithm approximates the optimum~$k_\opt \le u$ within a factor of two: let $2k$ be the output of the approximation algorithm, i.e., $k$ is minimal such that $\rel(\phi) \wedge \psi_k$ is realizable. Thus, Lemma~\ref{lemma_altcolorgames}.\ref{lemma_altcolorgames_ltl2prompt} implies $k_\opt \le 2k$. Conversely, $\phi$ being realizable with respect to $k_\opt$ implies that  $\rel(\phi) \wedge \psi_{k_\opt}$ is realizable due to Lemma~\ref{lemma_altcolorgames}.\ref{lemma_altcolorgames_prompt2ltl}, i.e., $k \le k_\opt$ due to minimality of $k$. 

Altogether, we obtain $k \le k_\opt \le 2k$. Recall that the algorithm returns $2k$, i.e., $\phi$ is realizable with respect to the returned value due to monotonicity. Also, the approximation ratio~$\nicefrac{2k}{2k - k_\opt}$ is bounded by 
$
\nicefrac{2k}{2k - k_\opt} \le \nicefrac{2k}{2k - k} = 2$, i.e., the bound found by our algorithm is at most twice the optimal bound.

\begin{theorem}
The optimization problem for $\prompt$ realizability can be approximated within a factor of two in doubly-exponential time. As a byproduct, one obtains a strategy witnessing the approximatively optimal bound.
\end{theorem}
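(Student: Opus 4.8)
The plan is to package the algorithm and correctness analysis laid out above into a self-contained procedure and verify three things: it halts in doubly-exponential time, the value it returns is genuinely realized by the accompanying strategy, and that value lies within a factor of two of the optimum.

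First I would describe the algorithm. On input~$\phi$, run the decision procedure of Theorem~\ref{thm_prompt} to test whether $\phi$ is realizable with respect to \emph{some} bound; if it is not, report the optimum~$\infty$. Otherwise Theorem~\ref{thm_prompt} provides a concrete~$u$, doubly-exponential in $\size{\phi}$, with $k_\opt \le u$. The algorithm then scans $k = 1, 2, \ldots, u$ for the least $k$ with $\rel(\phi) \wedge \psi_k$ realizable --- by Lemma~\ref{lemma_altcolorgames_complexity}, each such test, and a witnessing strategy for it, is computable in doubly-exponential time because $k \le u$ is at most doubly-exponential --- and it outputs $2k$ together with the strategy realizing $\phi$ with respect to $2k$ that Lemma~\ref{lemma_altcolorgames}.\ref{lemma_altcolorgames_ltl2prompt} extracts from that witness.

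Next I would prove $k \le k_\opt \le 2k$. For the left inequality, realizability of $\phi$ with respect to $k_\opt$ yields realizability of $\rel(\phi) \wedge \psi_{k_\opt}$ by Lemma~\ref{lemma_altcolorgames}.\ref{lemma_altcolorgames_prompt2ltl}, so minimality of $k$ forces $k \le k_\opt$; in particular the scanned range is nonempty and contains~$k$. For the right inequality, realizability of $\rel(\phi) \wedge \psi_k$ gives, via Lemma~\ref{lemma_altcolorgames}.\ref{lemma_altcolorgames_ltl2prompt}, that $\phi$ is realizable with respect to $2k$, whence $k_\opt \le 2k$ by definition of the optimum; this same fact certifies that the returned value $2k$ is genuinely realized. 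Hence the approximation ratio satisfies $\nicefrac{2k}{k_\opt} \le \nicefrac{2k}{k} = 2$.

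Finally, the running time: the preliminary test costs doubly-exponential time by Theorem~\ref{thm_prompt}, and the scan performs at most~$u$, i.e.\ doubly-exponentially many, iterations, each a doubly-exponential-time invocation of Lemma~\ref{lemma_altcolorgames_complexity}, so the total remains doubly-exponential. The only delicate point --- and the place where the argument would collapse without it --- is that the search space must be bounded: Lemma~\ref{lemma_altcolorgames_complexity} is applicable only when $k$ is at most doubly-exponential, so everything hinges on the doubly-exponential upper bound on $k_\opt$ supplied by Theorem~\ref{thm_prompt}; absent that bound one would have to probe an a priori unbounded sequence of values and could not guarantee termination within doubly-exponential time.
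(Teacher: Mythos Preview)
Your proposal is correct and follows essentially the same route as the paper: test realizability, bound the search space by the doubly-exponential $u$ from Theorem~\ref{thm_prompt}, linearly scan for the minimal $k$ with $\rel(\phi)\wedge\psi_k$ realizable via Lemma~\ref{lemma_altcolorgames_complexity}, output $2k$ and the strategy from Lemma~\ref{lemma_altcolorgames}.\ref{lemma_altcolorgames_ltl2prompt}, and certify $k\le k_\opt\le 2k$ using both directions of Lemma~\ref{lemma_altcolorgames}. Your formulation of the approximation ratio as $\nicefrac{2k}{k_\opt}\le\nicefrac{2k}{k}=2$ is in fact the cleaner way to state it.
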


%%%%%%%%%%%%%%%%%%%%%%%%%%%%%%%%%%%%%%%%%%%%%%%%%%%%%
%%%%%%%%%%%%%%%%%%%%%%%%%%%%%%%%%%%%%%%%%%%%%%%%%%%%%
\section{Empirical Evaluation}
\label{section_experiments}
In the previous section we have described an algorithm that, given some $\prompt$ specification $\varphi$, approximates the optimal bound~$k$ for which the formula can be realized. The algorithm uses $\ltl$ realizability checking as a black-box to determine the realizability of the formulas~$\rel(\phi) \land \psi_k$, where $k$ is a parameter from a doubly-exponential set. The search strategy heavily influences the running time of the algorithm (but not the worst-case complexity). 
%The method for finding such a $k$ is very basic, however.
%Since it is known that, if any $k$ exists with respect to which $\varphi$ is realizable, $\varphi$ is also realizable with respect to some fixed $u$ that is doubly-exponential in $\size{\varphi}$, it suffices to try out all $k$ with $1 \leq k \leq u$.
%For any given $k$, the problem is then reduced to checking realizability of the $\ltl$ formula~$\rel(\phi) \land \psi_k$. 
%Thus, we use $\ltl$ realizability checking as a black-box in our algorithm.
Towards an implementation, we rely on bounded $\ltl$ synthesis \cite{FinkbeinerSchewe13} for checking the realizability of $\rel(\varphi) \land \psi_k$.
In addition to computing the smallest strategy that realizes $\rel(\varphi) \land \psi_k$, bounded synthesis also allows us to search for strategies of some fixed size $n$.
Thus, we obtain a sub-procedure that takes as input some $\prompt$ formula, as well as some values $n$ and $k$, which checks whether or not there exists a finite-state strategy of size $n$ that realizes $\varphi$ with respect to $2k$.

To this end, it first constructs the $\ltl$ formula $\varphi' = \rel(\phi) \land \psi_k$ from $\varphi$, which is then given to the tool \bosy~\cite{FaymonvilleFinkbeinerRabeTentrup13} together with the desired size $n$ of the strategy.
\bosy then checks $\varphi'$ for realizability and returns a strategy of size $n$, if there exists one.
In order to do so, it first translates $\varphi'$ to a universal co-B\"uchi automaton that accepts the language of $\varphi'$. Based on this automaton, it constructs a \qbf query that is satisfiable if, and only if, there exists a strategy of size $n$ which is then solved by a combination of a \qbf preprocessor and a solver.
Due to Lemma~\ref{lemma_alternatingcolor_ltltoprompt}, we know that the strategy returned by \bosy realizes $\varphi$ with respect to $2k$ when restricted to $P$.

We evaluate our implementation on a family of arbiters.
Each arbiter manages some number $r$ of resources.
Player~$I$ poses requests $q_i$ for some resource~$1 \leq i \leq r$, while Player~$O$ has to grant them by playing $p_i$ for $1 \leq i \leq r$.
Moreover, Player~$O$ can only grant a single resource at a time.
In addition to the usual requirement that each request has to be answered eventually, we require that a request for one of the first $r_p$ resources is answered promptly, for $0 \leq r_p \leq r$.
Thus, for some parameters~$r$ and $r_p$, we construct the $\prompt$ specification
\[ \varphi_{r, r_p} \coloneq \bigwedge_{1 \leq i \leq r_p} \G(q_i \rightarrow \Fp p_i) \land \bigwedge_{r_p < i \leq r} \G (q_i \rightarrow \F p_i) \land \bigwedge_{i \neq j} \G (\neg p_i \lor \neg p_j). \]
Note that, for each $r \in \nats$, the specification $\varphi_{r, 0}$ is an $\ltl$ formula.

For our experiments we used machines equipped with Intel Xeon-Haswell processors running at 3.6\thinspace GHz with 32\thinspace GB of memory.
The complete dataset we report on in this evaluation is available at \url{https://arxiv.org/abs/1511.09450}.

We first compare the running time of $\ltl$ synthesis with the running time of our implementation on the $\prompt$ formulas in order to quantify the slowdown incurred by performing $\prompt$ synthesis instead of $\ltl$ synthesis.
Since, as previously explained, a naive search strategy that simply performs bounded synthesis on $\rel(\varphi) \land \psi_k$ for increasing $k$ is infeasible, we instead search for a realizing implementation along the diagonals of the search space, as shown in the left-hand side of Figure~\ref{fig:search_strategy}.
We run our implementation with this search strategy on $\varphi_{r,r_p}$ for each $r \in [1;10]$ and each $r_p \in [0;r]$ and compared the running time to that of \bosy on $\varphi_{r,0}$.\footnote{Note that it is not possible to run \bosy on $\varphi_{r, r_p}$ for $r_p > 0$, as \bosy performs $\ltl$ synthesis, while $\varphi_{r, r_p}$ is a $\prompt$ formula for $r_p > 0$.}

The results are shown in the right-hand side of Figure~\ref{fig:search_strategy}.
For each comparison, the number of resources $r$ is denoted by the line-color, while the number of prioritized resources is displayed on the x-axis.
The slowdown is shown on the y-axis, which is logarithmically scaled.
Note that there does not exist a data point for each pair $(r, r_p) $ with $1 \leq r \leq 10$ and $0 \leq r_p \leq r$, since, for $r \geq 9$, \bosy timed out after 100 minutes and, for all other values not shown, our implementation timed out after 100 minutes.

\begin{figure}
	\centering
	\raisebox{.6cm}{\begin{tikzpicture}[thick,-stealthnew,arrowhead=3mm]
		% Use - in order to remove arrow tip from grid
		\draw[thin,gray,-] (0,0) grid (4.5,4.5);
		\path[draw,-stealth,gray,-stealth] (0,0) -- (0,4.5);
		\path[draw,-stealth,gray,-stealth] (0,0) -- (4.5,0);
		
		\foreach \coord in {1,2,3,4,5}
			\node[anchor=north] at ($(\coord,0) - (1,0)$) {$\coord$};
		\foreach \coord in {1,2,3,4,5}
			\node[anchor=east] at ($(0,\coord) - (0,1)$) {$\coord$};
		
		\path[draw] (0,0) -- (0,1);
		\path[draw] (0,1) -- (1,0);
		\path[draw] (1,0) -- (0,2);
		\path[draw] (0,2) -- (1,1);
		\path[draw] (1,1) -- (2,0);
		\path[draw] (2,0) -- (0,3);
		\path[draw] (0,3) -- (1,2);
		\path[draw] (1,2) -- (2,1);
		\path[draw] (2,1) -- (3,0);
		\path[draw] (3,0) -- (0,4);
		\path[draw] (0,4) -- (1,3);
		\path[draw] (1,3) -- (2,2);
		\path[draw] (2,2) -- (3,1);
		\path[draw] (3,1) -- (4,0);
		\path[draw,dashed] (4,0) -- (.33,4.5);
		
		\node[anchor=north west] at (4.5,0) {$k$};
		\node[anchor=south east] at (0,4.5) {$n$};
	\end{tikzpicture}}
	\hfill
	\begin{tikzpicture}[thick,xscale=.9,yscale=.8]
		\begin{axis}[
			thick,
			ymode=log,
			xlabel=$r_p$,
			ylabel={Slowdown},
			grid=both,
			enlarge x limits=true,
			legend pos = outer north east,
			legend cell align = left
		]
			\addplot+[] table {data/slowdown-1.dat};
			\addplot+[] table {data/slowdown-2.dat};
			\addplot+[] table {data/slowdown-3.dat};
			\addplot+[] table {data/slowdown-4.dat};
			\addplot+[] table {data/slowdown-5.dat};
			\addplot+[] table {data/slowdown-6.dat};
			\addplot+[] table {data/slowdown-7.dat};
			\addplot+[] table {data/slowdown-8.dat};
			\addplot+[] table {data/slowdown-9.dat};
			\addplot+[] table {data/slowdown-10.dat};
		\legend{$r = 1$, $r=2$, $r=3$, $r=4$, $r=5$, $r=6$, $r=7$, $r=8$, $r=9$, $r=10$} 
		\end{axis}
	\end{tikzpicture}
	
	\caption{The search strategy for some realizing implementation on the left-hand side and the slowdown of $\prompt$ synthesis on the right-hand side.}
	\label{fig:search_strategy}
\end{figure}
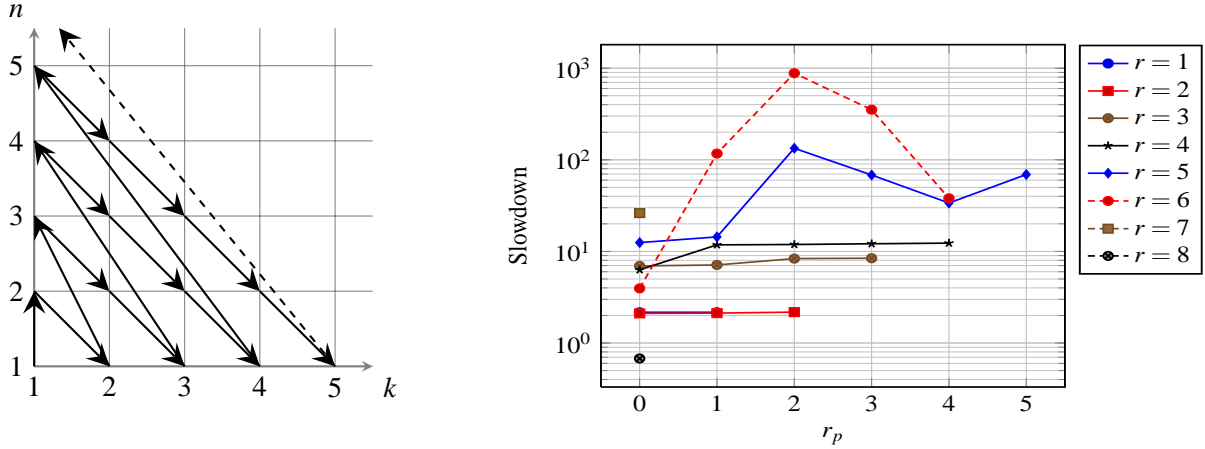

We see that, when given an $\ltl$ formula $\varphi_{r,0}$, in general our implementation is slower than \bosy by a factor on the order of magnitude of $10^1$.
This results from our tool calling \bosy multiple times even for $\ltl$ formulas, as, in order to find a strategy of size $n$ that realizes $\varphi$, our implementation first searches for strategies of sizes $n'-b$ that realize $\varphi_{r,0}$ with respect to $1 + b$ for $n' < n$ and $0 \leq b < n'$ (cf.\ the search strategy shown in Figure~\ref{fig:search_strategy}).
For $\varphi_{8,0}$, however, our implementation finds a realizing strategy after 1\thinspace 299 seconds, while \bosy takes 1\thinspace 914 seconds for the same task.
This discrepancy is likely due to differences in the generated automaton that lead to different QBF formulas and result in different solving times.
%We assume that this difference is due to the fact that the \bosy invocation from our implementation specifies a concrete size of the strategy to be synthesized, while without such a parameter, \bosy searches iteratively.

When asked to realize $\varphi_{r,r_p}$ with $r_p > 0$, however, prioritizing around half of the available resources incurs the greatest penalty in terms of running time.
Recall that each $\Fp \psi$ in $\varphi_{r, r_p}$ is first rewritten to $(p\rightarrow (p\U(\neg p\U\rel(\psi))))\wedge(\neg p\rightarrow (\neg p\U(p\U\rel(\psi))))$ before being given to \bosy, while the traditional $\F\psi$ operator is a shorthand for the significantly smaller formula $\true \U \psi$.
Thus, for increasing $r_p$, the automaton and consequently the formula given to the \qbf solver becomes larger.
We noticed that determining that no realization of $\phi_{r, r_p}$ with some parameters $n$ and $k$ exists was faster for increasing $r_p$, in particular for $r = 5$ and $r = 6$.
Hence, the search terminates earlier despite an increased number of solved \qbf queries, resulting in an overall smaller slowdown.
%For increasing $r_p$, however, the formula $\rel(\varphi_{r, r'})$ becomes more and more homogenous, i.e., its structure becomes more and more regular.
%Hence, the underlying \qbf solver is able to use this homogeneity to speed up the inference process\todo{Are we sure about this?}.
%Thus, for $r_p \approx \nicefrac{r}{2}$, the formula given to the \qbf solver is quite large, without exhibiting enough homogeneity that would speed up the inference, which explains the peak in the slowdown around this number of prioritized resources.

After having evaluated the running time of our tool against that of that of the underlying bounded synthesis tool, we now evaluate the feasibility of our approach for the search for a strategy of a given size realizing a formula with respect to some given bound.
In other words, we are given some $\varphi_{r,r_p}$, some size $n$ and a bound $k$ and want to decide whether or not a strategy of size at most $n$ exists that realizes~$\varphi_{r,r_p}$ with respect to $2k$.

In order to satisfy the requirement that every request is eventually granted, at least $r-1$ states are required.
Also, the smallest possible strategy is a round-robin strategy, which simply grants each resource in order.
This strategy realizes the formula with respect to the bound~$r$.
These two propositions yield upper bounds on $n$ and $k$ for a given $r$.
Hence, for each $r \in [1,10]$ and each $r_p \in [1,r]$ we search for implementations of $\varphi_{r,r_p}$ of size~$n \in [r-1,2r]$ and with respect to the bound~$k \in [1,r]$ on the block-size.

We show the results for $\varphi_{4,1}$ and $\varphi_{6,2}$ in Figure~\ref{fig:eval_runtime_individual}.
Green circles, red squares, and yellow triangles denote realizable parameter combinations, unrealizable ones, and those for which our implementation timed out after 20 minutes, respectively.
Note that in the benchmark of $\varphi_{6,2}$ there are four invocations that ran longer than $20$ seconds and are thus not shown in the diagram.
The searches for strategies of size~$7$ that realize $\varphi_{6,2}$ with respect to the bounds $2$ and $4$, respectively, as well as the search for a strategy of size $8$ that realizes $\varphi_{6,2}$ with respect to the bound $2$ were eventually unsuccessful after 23 seconds, 833 seconds, and 1\thinspace 009 seconds, respectively.
There exists, however, a strategy of size $12$ that realizes $\varphi_{6,2}$ with respect to $2$, which was found after $72$ seconds.
\begin{figure}
	\centering
	\resizebox{.45\textwidth}{!}{\begin{tikzpicture}[thick,scale=.85]
		\begin{axis}[
			thick,
			title={$4$ resources, $1$ prompt},
			xlabel=$n$,
			xtick=data,
			ylabel=$k$,
			ytick=data,
			grid=both,
			view={-18}{25},
			zlabel={Running time $[\mathit{sec}]$},
			colormap={MYN}{rgb=(1,1,0); rgb=(1,0,0); rgb=(0,1,0)},
			point meta=\thisrow{color},
			point meta min=-1,
			point meta max=1,
			enlarge x limits=true,
			enlarge y limits=true,
			zmin=0,zmax=1,
			visualization depends on={value \thisrow{shape} \as \myshape},
			scatter/@pre marker code/.append style={/tikz/mark=\myshape},
			mark options={scale=1.5}
		]
			\addplot3[surf,only marks,ycomb,scatter] table {data/tikz-4-1.dat};
		\end{axis}
	\end{tikzpicture}}
	\hfill
	\resizebox{.45\textwidth}{!}{\begin{tikzpicture}[thick,scale=.85]
		\begin{axis}[
			thick,
			title={$6$ resources, $2$ prompt},
			xlabel=$n$,
			xtick=data,
			ylabel=$k$,
			ytick=data,
			grid=both,
			view={-18}{25},
			zlabel={Running time $[\mathit{sec}]$},
			colormap={MYN}{rgb=(1,1,0); rgb=(1,0,0); rgb=(0,1,0)},
			point meta=\thisrow{color},
			point meta min=-1,
			point meta max=1,
			enlarge x limits=true,
			enlarge y limits=true,
			zmax=20,
			zmin=0,
			visualization depends on={value \thisrow{shape} \as \myshape},
			scatter/@pre marker code/.append style={/tikz/mark=\myshape},
			mark options={scale=1.5}
		]
			\addplot3[surf,only marks,ycomb,scatter] table {data/tikz-6-2.dat};
		\end{axis}
	\end{tikzpicture}}
	\caption{Running times and results for $\varphi_{4,1}$ on the left-hand side and $\varphi_{6,2}$ on the right-hand side. Green circles denote realizable parameters, while red squares denote unrealizable parameters. Yellow triangles denote that the tool encountered a time-out after 20 minutes.}
	\label{fig:eval_runtime_individual}
\end{figure}
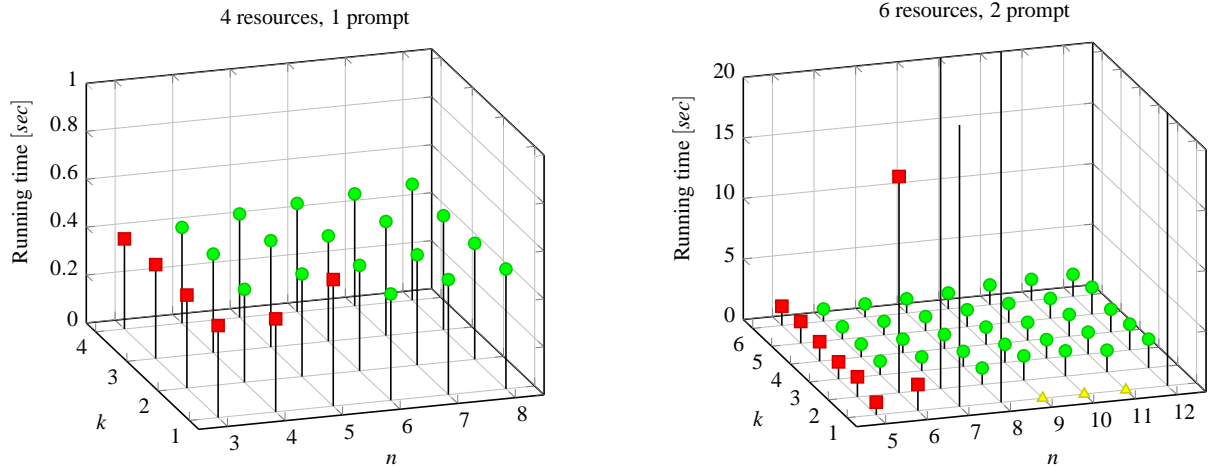

Note that both evaluations shown in Figure~\ref{fig:eval_runtime_individual} exhibit a tradeoff.
There exist strategies that realize $\phi_{6,2}$ with respect to the bounds $6$, $4$, and $2$.
These strategies have size $6$, $8$, and $12$, respectively.
We show the minimal strategies $\sigma_{6,3}$ and $\sigma_{12,1}$ realizing $\varphi_{6,2}$ with respect to the bounds $6$ and $2$, respectively, in Figure~\ref{fig:eval_strategies}. The strategy~$\sigma_{6,3}$ proceeds in a round-robin fashion using only $6$ states while $\sigma_{12,1}$ 
grants $p_1$ every second step using $12$ states to ensure that all requests are eventually granted.

\begin{figure}
	\centering
	\begin{tikzpicture}[thick,>=stealth,node distance=.7 and 1.5]
		\node[draw,shape=circle,initial] (6-3-1) {};
		\node[draw,shape=circle,above=of 6-3-1] (6-3-2) {};
		\node[draw,shape=circle,right=of 6-3-2] (6-3-3) {};
		\node[draw,shape=circle,below=of 6-3-3] (6-3-4) {};
		\node[draw,shape=circle,below=of 6-3-4] (6-3-5) {};
		\node[draw,shape=circle,left=of 6-3-5] (6-3-6) {};
		
		\path[draw,->] (6-3-1) -- node[anchor=east] {$\nicefrac{2^I}{\set{p_2}}$} (6-3-2);
		\path[draw,->] (6-3-2) -- node[anchor=south] {$\nicefrac{2^I}{\set{p_6}}$} (6-3-3);
		\path[draw,->] (6-3-3) -- node[anchor=west] {$\nicefrac{2^I}{\set{p_4}}$} (6-3-4);
		\path[draw,->] (6-3-4) -- node[anchor=west] {$\nicefrac{2^I}{\set{p,p_1}}$} (6-3-5);
		\path[draw,->] (6-3-5) -- node[anchor=north] {$\nicefrac{2^I}{\set{p,p_5}}$} (6-3-6);
		\path[draw,->] (6-3-6) -- node[anchor=east] {$\nicefrac{2^I}{\set{p,p_3}}$} (6-3-1);
		
		\node[anchor=south east] at (6-3-2.north west) {$\sigma_{6,3}$:};
		
		\begin{scope}[node distance=0cm and 3cm]
			\node[draw,shape=circle,initial,right=of 6-3-4] (12-1-1) {};
		\end{scope}
		
		\node[draw,shape=circle,above=of 12-1-1] (12-1-2) {};
		\node[draw,shape=circle,right=of 12-1-2] (12-1-3) {};
		\node[draw,shape=circle,right=of 12-1-3] (12-1-4) {};
		\node[draw,shape=circle,right=of 12-1-4] (12-1-5) {};
		\node[draw,shape=circle,right=of 12-1-5] (12-1-6) {};
		\node[draw,shape=circle,below=of 12-1-6] (12-1-7) {};
		\node[draw,shape=circle,below=of 12-1-7] (12-1-8) {};
		\node[draw,shape=circle,left=of 12-1-8] (12-1-9) {};
		\node[draw,shape=circle,left=of 12-1-9] (12-1-10) {};
		\node[draw,shape=circle,left=of 12-1-10] (12-1-11) {};
		\node[draw,shape=circle,left=of 12-1-11] (12-1-12) {};
		
		\path[draw,->] (12-1-1) -- node[anchor=east] {$\nicefrac{2^I}{\set{p,p_2}}$} (12-1-2);
		\path[draw,->] (12-1-2) -- node[anchor=south] {$\nicefrac{2^I}{\set{p_4}}$} (12-1-3);
		\path[draw,->] (12-1-3) -- node[anchor=south] {$\nicefrac{2^I}{\set{p,p_1}}$} (12-1-4);
		\path[draw,->] (12-1-4) -- node[anchor=south] {$\nicefrac{2^I}{\set{p_2}}$} (12-1-5);
		\path[draw,->] (12-1-5) -- node[anchor=south] {$\nicefrac{2^I}{\set{p,p_5}}$} (12-1-6);
		\path[draw,->] (12-1-6) -- node[anchor=west] {$\nicefrac{2^I}{\set{p_1}}$} (12-1-7);
		\path[draw,->] (12-1-7) -- node[anchor=west] {$\nicefrac{2^I}{\set{p,p_2}}$} (12-1-8);
		\path[draw,->] (12-1-8) -- node[anchor=north] {$\nicefrac{2^I}{\set{p_3}}$} (12-1-9);
		\path[draw,->] (12-1-9) -- node[anchor=north] {$\nicefrac{2^I}{\set{p,p_1}}$} (12-1-10);
		\path[draw,->] (12-1-10) -- node[anchor=north] {$\nicefrac{2^I}{\set{p_2}}$} (12-1-11);
		\path[draw,->] (12-1-11) -- node[anchor=north] {$\nicefrac{2^I}{\set{p,p_6}}$} (12-1-12);
		\path[draw,->] (12-1-12) -- node[anchor=east] {$\nicefrac{2^I}{\set{p_1}}$} (12-1-1);
		
		\node[anchor=south east] at (12-1-2.north west) {$\sigma_{12,1}$:};
		
	\end{tikzpicture}
	\caption{Two strategies $\sigma_{6,3}$ and $\sigma_{12,1}$ realizing $\varphi_{6,2}$ with respect to the bounds $6$ and $2$, respectively. A transition of the form $m \xrightarrow{\nicefrac{i}{o}} m'$ denotes that upon reading $i \in 2^I$ in state $m$, Player~$O$ outputs $o \in 2^O$ and updates her memory to $m'$ (cf.\ Subsection~\ref{subsection_realizability}).}	
	\label{fig:eval_strategies}
\end{figure}
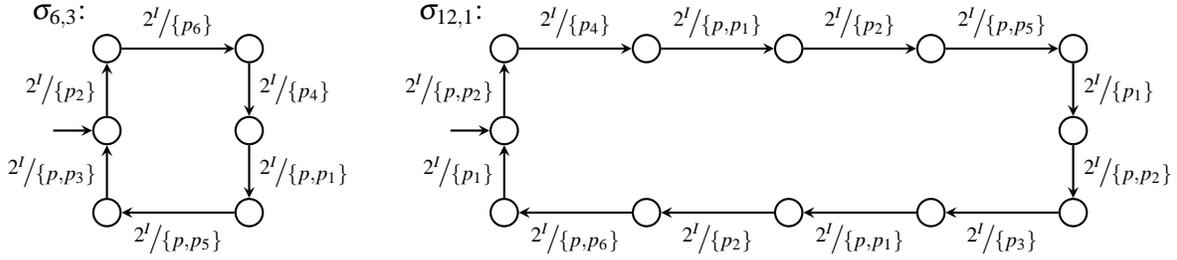

We also see that, in general, unsuccessful searches for a strategy with given size and bound take longer than successful searches for larger strategies or for strategies with a larger bound.
Intuitively, this is due to the fact that the resulting \qbf formula is satisfiable if, and only if, there exists a strategy with the given parameters and that refuting all possible strategies of size $n$ is, in general, harder than showing that such a strategy exists.
Hence, it is of interest to to investigate the border between the realizable and the unrealizable parameters.
We do so in the next section.

%%%%%%%%%%%%%%%%%%%%%%%%%%%%%%%%%%%%%%%%%%%%%%%%%%%%%
%%%%%%%%%%%%%%%%%%%%%%%%%%%%%%%%%%%%%%%%%%%%%%%%%%%%%
\section{Trading Memory for Quality and Vice Versa}
\label{section_tradeoffs}
We have seen in the previous section that there exist $\prompt$ formulas $\varphi$ that exhibit a tradeoff, i.e., for some $k < k'$, the minimal strategy realizing $\varphi$ with respect to $2k$ may be larger than the minimal strategy realizing $\varphi$ with respect to $2k'$.
In this section, we investigate the Pareto frontier of this tradeoff, i.e., those positions in the search space shown in the previous section, at which it is not possible to decrease either the size of the strategy or the bound it realizes without increasing the other value.
To this end, we define the set of realizable parameters $\realparms(\varphi) \subseteq \nats \times \nats$ of $\varphi$ such that $(n, k) \in \realparms(\varphi)$ if and only if there exists a strategy $\sigma$ with $\card{\sigma} = n$ that satisfies $\varphi$ with respect to $k$. Note that $\realparms(\varphi)$ is upwards-closed, i.e., if $(n, k) \in \realparms(\varphi)$, then also $(n+1, k) \in \realparms(\varphi)$ and $(n, k+1) \in \realparms(\varphi)$.

\begin{wrapfigure}{r}{.40\textwidth}
	\centering
	\begin{tikzpicture}[thick]
	\newcommand{\assignmentspace}[1]{
		\path[draw,thin,gray] (0,0) grid (#1.5,#1.5);
		\path[draw,-stealth] (0,0) -- (#1.5,0);
		\path[draw,-stealth] (0,0) -- (0,#1.5);
		\node[anchor=north west] at (#1.5,0) {$n$};
		\node[anchor=south east] at (0,#1.5) {$k$};
	}
	\assignmentspace{5}
	
	% y = 5
	\foreach \x in {1,...,5} \node (\x-5) at (\x,5) {\textbf{$\times$}};
	% y = 4
	\foreach \x in {2,...,5} \node (\x-4) at (\x,4) {\textbf{$\times$}};
	% y = 3
	\foreach \x in {3,...,5} \node (\x-3) at (\x,3) {\textbf{$\times$}};
	% y = 2
	\foreach \x in {3,...,5} \node (\x-2) at (\x,2) {\textbf{$\times$}};
	% y = 1
	\foreach \x in {5,...,5} \node (\x-1) at (\x,1) {\textbf{$\times$}};
	
	\node[fill=white,align=center] (tradeoffs-label) at (1,1) {Pareto\\positions};
	\foreach \x/\y in {1/5,2/4,3/2,5/1}
		\path[draw,-stealth] (tradeoffs-label) to (\x-\y);
		
	\path[pattern=north east lines,pattern color=gray] (1,5.5) -- (1,5) -- (2,5) -- (2,4) -- (3,4) -- (3,2) -- (5,2) -- (5,1) -- (5.5,1) -- (5.5,5.5) -- (1,5.5);
	\path[draw] (1,5.5) -- (1,5) -- (2,5) -- (2,4) -- (3,4) -- (3,2) -- (5,2) -- (5,1) -- (5.5,1);

	\node[fill=white,opacity=.8] (realparms-label) at (4.5,4.5) {$\realparms(\varphi)$};
	\end{tikzpicture}
	\caption{The geometrical interpretation of $\realparms(\varphi)$ and the Pareto positions of $\varphi$.}
	\label{fig:tradeoff_geom}
\end{wrapfigure}
A Pareto position of a formula $\varphi$ is a pair of realizable parameters $(n, k) \in \realparms(\varphi)$ such that it is not possible to realize the bound $k$ with a strategy of size $n-1$, and no strategy of size $n$ realizes a smaller bound than $k$.
Formally, a pair of realizable parameters $(n, k) \in \realparms(\varphi)$ is a Pareto position if both $(n-1, k) \notin \realparms(\varphi)$ and $(n, k-1) \notin \realparms(\varphi)$.
When considering the set $\realparms(\varphi)$ geometrically, the Pareto positions of $\varphi$ are the corner points of the area defined by $\realparms(\varphi)$, as shown in Figure~\ref{fig:tradeoff_geom}.

By a simple geometrical argument over the space $\nats \times \nats$ that combines Theorem~\ref{thm_prompt} with the upwards-closure of $\realparms(\varphi)$ we obtain a doubly-exponential bound in $\card{\phi}$ on the number of Pareto positions of~$\phi$.

\begin{lemma}
Let $\varphi$ be a $\prompt$ formula.
There exist at most $\bigo(2^{2^{\card{\varphi}}})$ Pareto positions of $\varphi$.
\end{lemma}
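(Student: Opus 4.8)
The plan is to exploit the two facts that Theorem~\ref{thm_prompt} gives us together with the upwards-closure of $\realparms(\varphi)$. First recall from Theorem~\ref{thm_prompt} that if $\varphi$ is realizable at all, then there is a bound $k^* \in \bigo(2^{2^{\size{\varphi}}})$ and a finite-state strategy of size $n^* \in \bigo(2^{2^{\size{\varphi}}})$ realizing $\varphi$ with respect to $k^*$; hence $(n^*, k^*) \in \realparms(\varphi)$. By upwards-closure, every pair $(n,k)$ with $n \ge n^*$ and $k \ge k^*$ lies in $\realparms(\varphi)$. Consequently, any Pareto position $(n,k)$ must satisfy $n \le n^*$ \emph{or} $k \le k^*$ --- more precisely, a Pareto position with $n > n^*$ would force $k \le k^*$ (otherwise $(n-1,k)$ would also be in $\realparms(\varphi)$, contradicting the Pareto condition), and symmetrically.

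The key step is then a counting argument over the two strips $\{(n,k) : n \le n^*\}$ and $\{(n,k) : k \le k^*\}$. In each fixed row $k$ (with $k \le k^*$), upwards-closure in the $n$-direction means the realizable parameters form a final segment $\{n : n \ge n_k\}$ of $\nats$, so there is at most one $n$ with $(n,k)$ a Pareto position, namely $n = n_k$ (the minimal realizing size for bound $k$); moreover $n_k \le n^*$ since $(n^*, k^*) \in \realparms(\varphi)$ implies $(n^*, k) \in \realparms(\varphi)$ for... wait, not necessarily, but at least $n_k$ is finite whenever that row contains any realizable parameter. Dually, in each fixed column $n$ (with $n \le n^*$) there is at most one Pareto position. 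Since every Pareto position lies in a row with $k \le k^*$ or a column with $n \le n^*$, the total count is bounded by $(k^* + 1) + (n^* + 1) \in \bigo(2^{2^{\size{\varphi}}}) + \bigo(2^{2^{\size{\varphi}}}) = \bigo(2^{2^{\size{\varphi}}})$, as claimed. (One can picture this exactly as in Figure~\ref{fig:tradeoff_geom}: the corner points of an upwards-closed staircase region all lie either in the bottom $k^*$ rows or the leftmost $n^*$ columns, once we know the region contains the corner $(n^*,k^*)$.)

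The main obstacle --- really more a point of care than a genuine difficulty --- is making precise the claim that every Pareto position falls into one of the two bounded strips, i.e., ruling out Pareto positions simultaneously deep in both directions. This follows because $(n,k) \in \realparms(\varphi)$ with $n > n^*$ and $k > k^*$ would give, by two applications of upwards-closure starting from $(n^*,k^*)$ in the appropriate directions, that $(n-1,k) \in \realparms(\varphi)$ (since $n - 1 \ge n^*$ and $k \ge k^* $... here we need $n-1 \ge n^*$, which holds as $n > n^*$), contradicting that $(n,k)$ is Pareto. So the strip $n \le n^*$ together with the strip $k \le k^*$ covers all Pareto positions. Everything else is the elementary row/column counting above. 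We should also note the degenerate case: if $\varphi$ is not realizable, then $\realparms(\varphi) = \emptyset$ and there are zero Pareto positions, which trivially satisfies the bound.
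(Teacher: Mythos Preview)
Your proposal is correct and follows essentially the same approach as the paper: invoke Theorem~\ref{thm_prompt} to obtain a doubly-exponential point $(n^*,k^*)\in\realparms(\varphi)$, use upwards-closure to confine all Pareto positions to the two strips $n\le n^*$ and $k\le k^*$, and count at most one Pareto position per row and per column. If anything, your treatment of the exclusion step (ruling out Pareto positions with $n>n^*$ \emph{and} $k>k^*$) is slightly more careful than the paper's phrasing.
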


\begin{proof}
If $\varphi$ is not realizable with respect to any bound $k$, then we have $\realparms(\varphi) = \emptyset$ and thus, the statement holds true.
Thus, assume $\varphi$ is realizable with respect to some $k$.
Due to Theorem~\ref{thm_prompt}, we obtain that $\varphi$ is realizable with respect to some $k' \in \bigo(2^{2^{\card{\varphi}}})$, which is witnessed by a strategy of size $n' \in \bigo(2^{2^{\card{\varphi}}})$, i.e., $(n', k') \in \realparms(\varphi)$.

Clearly, there are at most $k'$ Pareto positions $(n, k)$ with $k \leq k'$, since otherwise, upwards-closure of $\realparms(\varphi)$ would be violated.
For the same reason, there are at most $n'$ Pareto positions $(n, k)$ with $n \leq n'$.
Finally, there can exist no Pareto positions $(n, k)$ with either $n \geq n'$ or $k \geq k'$, again due to upwards-closure of $\realparms(\varphi)$.
Thus, there exist at most $n' + k' \in \bigo(2^{2^{\size{\phi}}})$ Pareto positions of $\varphi$.
\end{proof}

Having shown that the number of Pareto positions has an upper bound, we now investigate the Pareto frontier in general.
We show that fixing one parameter yields exponential and doubly-exponential upper bounds on the other parameter, respectively.
For a fixed $n$, this upper bound on $k$ is obtained by a reduction to the model checking problem for $\prompt$.
For a fixed $k$, however, we obtain the upper bound on $n$ by turning $\phi$ into a parity game of doubly-exponential size and solving this game.

\begin{lemma}
    Let $\varphi$ be a $\prompt$ formula.
\begin{enumerate}
  \item \label{lemma_tradeoffs_nfixed}
 Let $\sigma$ be a strategy that realizes $\varphi$ with $\card{\sigma} = n$.
    Then $(n, k) \in \realparms(\varphi)$ for some $k \in \bigo(n \cdot 2^{\card{\varphi}})$.
  
  \item \label{lemma_tradeoffs_kfixed} 
   Let $\varphi$ be realizable w.r.t.\ $k$.
    Then $(n, k) \in \realparms(\varphi)$ for some $n \in \bigo(2^{\card{\varphi}^2 \cdot (2(k+1))^{2 \card{\varphi}}})$ %$m \in \bigo(2^{2^{(k+1) \cdot \card{\varphi}}})$.
\end{enumerate}
\end{lemma}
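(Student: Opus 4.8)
For part~\ref{lemma_tradeoffs_nfixed}, the plan is to fix the strategy $\sigma$ of size $n$ and reduce the question of the smallest bound $k$ it realizes to a \prompt model checking problem. Concretely, I would take the memory structure $\mem$ of size $n$ implementing $\sigma$, form its product with an automaton (or structure) capturing $\varphi$, and observe that the set of plays consistent with $\sigma$ is exactly the set of traces through a finite structure of size $\bigo(n \cdot \card{\varphi})$ (at most $n$ times the number of subformulas). Since $\sigma$ realizes $\varphi$ with respect to \emph{some} bound, it realizes $\varphi$ with respect to the smallest bound $k$ for which the alternating-color relativization holds on all consistent plays; and by the same pumping argument that Alur et al.\ and Kupferman et al.\ use for \prompt model checking, if a bound exists at all then a bound linear in the size of the finite structure suffices, i.e.\ $k \in \bigo(n \cdot 2^{\card{\varphi}})$ using the standard exponential translation of $\varphi$ into an automaton. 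The main point to get right here is that the exponential factor comes only from the formula-to-automaton step and not from $n$, so the final bound is a genuine product $n \cdot 2^{\card{\varphi}}$ rather than $2^{n \cdot \card{\varphi}}$.

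For part~\ref{lemma_tradeoffs_kfixed}, the plan is the approach already signalled in the text: hard-wire the fixed bound $k$ into the construction. Since $\varphi$ is realizable with respect to $k$, Lemma~\ref{lemma_altcolorgames}.\ref{lemma_altcolorgames_prompt2ltl} shows $\rel(\varphi) \wedge \psi_k$ is realizable as an $\ltl$ specification (actually one wants the sharper bookkeeping: one can directly build a nondeterministic automaton for $\varphi$ that guesses, for each $\Fp$-subformula, a witness position within $k$ steps, yielding an automaton of size roughly $(2(k+1))^{\card{\varphi}}$ — the $2(k+1)$ per subformula accounts for a counter up to $k$ together with whether the obligation is still pending). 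I would then determinize this automaton into a parity automaton; the Safra/Piterman construction squares the state count and gives $\bigo(\card{\varphi})$-many colors, producing a deterministic parity automaton of size $\bigo((2(k+1))^{2\card{\varphi}} \cdot 2^{\bigo(\card{\varphi}\log\card{\varphi})})$, which one folds into the exponent as $\card{\varphi}^2 \cdot (2(k+1))^{2\card{\varphi}}$. Splitting transitions as in the proof of Lemma~\ref{lemma_altcolorgames_complexity} turns this into a parity game whose arena has that many vertices, positional determinacy of parity games gives a winning strategy for Player~$O$ using no memory beyond the arena, and a finite-state strategy of size $n \in \bigo(2^{\card{\varphi}^2 \cdot (2(k+1))^{2\card{\varphi}}})$ realizing $\varphi$ with respect to $k$ falls out.

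The step I expect to be the main obstacle is the precise accounting in part~\ref{lemma_tradeoffs_kfixed}: one must be careful that hard-wiring $k$ into the automaton for $\varphi$ (rather than using $\rel(\varphi)\wedge\psi_k$, which would give a strategy for the bound $2k$ and hence only $(n,2k) \in \realparms(\varphi)$) genuinely yields a strategy for $k$ itself, and that the exponents match the claimed $\card{\varphi}^2 \cdot (2(k+1))^{2\card{\varphi}}$ after determinization — in particular that the number of colors contributes only a polynomial-in-$\card{\varphi}$ factor inside the exponent, absorbed into the $\card{\varphi}^2$ term. Part~\ref{lemma_tradeoffs_nfixed} is comparatively routine once the product structure is set up correctly, the only subtlety being to invoke the right pumping bound so that the exponential blow-up is additive-in-the-exponent with $\log n$ rather than multiplicative with $n$.
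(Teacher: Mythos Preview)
Your proposal is correct and follows essentially the same approach as the paper: for part~\ref{lemma_tradeoffs_nfixed} the paper also reduces to \prompt\ model checking and invokes the known $\bigo(n \cdot 2^{\card{\varphi}})$ bound from Alur et al., and for part~\ref{lemma_tradeoffs_kfixed} the paper cites the parity-game construction of Zimmermann~\cite{Zimmermann13}, which is exactly the hard-wired-counter-plus-determinization construction you sketch (your observation that one cannot go via $\rel(\varphi)\wedge\psi_k$ without losing a factor of two in $k$ is the right reason this separate construction is needed). One small slip: in part~\ref{lemma_tradeoffs_nfixed} the product structure has size $\bigo(n\cdot 2^{\card{\varphi}})$, not $\bigo(n\cdot\card{\varphi})$ as you first write, but you use the correct bound two lines later, so this is cosmetic.
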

\begin{proof}
    \ref{lemma_tradeoffs_nfixed}.)
    Fixing a strategy $\sigma$ of size $n$ simplifies the realizability problem to the problem of model checking $\prompt$.
    The upper bound of $k$ in the model checking problem for $\pltl$, which includes $\prompt$, is known to be linear in $n$ and exponential in $\card{\varphi}$~\cite{AlurEtessamiLaTorrePeled01}. % rational: in Kupferman et al., the complexity is quadratic in $n$

    \ref{lemma_tradeoffs_kfixed}.)
    Given a bound $k$, we can translate $\varphi$ to a parity game $\mathcal{P}$ of size $\bigo(2^{\card{\varphi}^2 \cdot (2(k+1))^{2 \card{\varphi}}})$ that is winning for player 0 if, and only if, $\varphi$ is realizable with bound $k$~\cite{Zimmermann13}.
    As a positional winning strategy for player 0 in $\mathcal{P}$ can be translated into a realizing strategy $\sigma$ for $\varphi$ with respect to $k$, with $\card{\sigma} \in \bigo(\card{\mathcal{P}})$. This proves our upper bound on the size of a realizing strategy.
    %Given a bound $k$, we can translate $\varphi$ to an $\ltl$ formula $\varphi'$ such that $\varphi$ can be satisfied with bound $k$ iff $\varphi'$ can be satisfied.
    %This construction replaces subformulas $\Fp \psi$ by a chain of Next operators $\psi \lor \X(\psi \lor \X(\dots \psi))$ of length $k$.
    %The size of $\varphi'$ (measured in terms of distinct subformulas) is in $\bigo((k+1)\cdot\card{\varphi})$ and the upper bound on $m$ follows from the realizability problem for $\ltl$~\cite{PnueliRosner89}.
\end{proof}

The previous two lemmas each presented upper bounds on the number of Pareto positions.
These bounds permit us to restrict the search space when looking for a realizing strategy:
Instead of fixing some $n$ or $k$ and checking doubly-exponentially many possibilities for the respective other parameter, we only need to consider exponentially many possible values for it.

We now turn our attention to the respective lower bounds, i.e., we provide a family of formulas $\phi_b$ that exhibit such a Pareto frontier.
More precisely, for each $\varphi_b$, there exists a family of strategies $\sigma_{b,j}$ such that $\sigma_{b,j}$ is of size exponential in $j$ and realizes $\varphi_b$ with respect to some $k$ that is exponential in $b-j$.
Each of these $\sigma_{b,j}$ is minimal for its respective bound.

Intuitively, $\varphi_b$ describes a game in which Player~$O$ decides at the beginning how much memory she wants to use by playing some number $j$.
Player~$I$ then plays some number in $[0;2^j)$, which Player~$O$ has to repeat afterwards, thus requiring her to use exponential memory in $j$.
Afterwards, Player~$I$ implements a binary counter using $b-j$ bits.
The game ends once Player~$I$ has counted up to $2^{b-j}-1$.
Moreover, $\varphi_b$ requires that this end is reached promptly, i.e., the bound $k$ is in $\bigo(2^{b-j})$, while every strategy realizing $\phi_b$ with respect to that bound $k$ has at least size~$2^j$.

\begin{theorem}
\label{thm:continuous_tradeoff}
For each $b \in \nats$ there exists a $\prompt$ formula $\varphi_b$ with $\card{\varphi_b} \in \bigo(b)$ such that for each $0 \leq j \leq b$, there exists an $n \in \bigo(2^j)$ and a $k \in \bigo(2^{b-j})$, such that $(n, k)$ is a Pareto position.
\end{theorem}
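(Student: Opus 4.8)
The plan is to construct $\varphi_b$ explicitly so that a play proceeds in three phases, and to make the winning condition enforce exactly the tradeoff described in the intuition. First I would fix a block of output propositions that lets Player~$O$ announce, in the opening rounds, a number $j \in [0;b]$ in unary (or via $\lceil \log b\rceil$ bits, since $\card{\varphi_b}\in\bigo(b)$ leaves room for either encoding). A fixed set of $\bigo(b)$ LTL conjuncts forces this announcement to be well-formed: exactly one $j$ is chosen, and the value of $j$ is remembered throughout, or rather, it is re-checked against Player~$I$'s behaviour in the later phases so that Player~$O$ cannot lie. In the second phase, Player~$I$ writes a bitstring of length $j$ — here I would use a marker proposition to delimit the end of this string, and an LTL conjunct enforcing that the string has exactly $j$ bits (this is where the $j$-announcement gets cross-checked: a formula like ``the input marker appears exactly $j$ steps after the $j$-announcement'' can be written with $\bigo(b)$ subformulas using a family of conjuncts indexed by the possible values of $j$). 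Player~$O$ must then reproduce this same $j$-bit string immediately afterwards; correctness of the copy is expressed by a conjunction of $\bigo(b)$ implications of the form ``if at phase-2 position $\ell$ the bit was $c$, then at the corresponding phase-3 position the bit is $c$,'' again enumerated over the $\bigo(b)$ possible offsets. In the third phase, Player~$I$ runs a $(b-j)$-bit binary counter (input propositions for the counter bits, standard LTL conjuncts for increment-correctness and for starting at $0$), and the play is required to reach the all-ones configuration $2^{b-j}-1$. Crucially I would attach $\Fp$ to this reachability requirement: $\Fp(\textrm{counter} = \textrm{all ones})$, so that the bound $k$ must dominate the full length of phase~3 plus the constant-length phases~1 and~2, i.e.\ $k \in \bigo(2^{b-j})$.

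Given this construction, the upper-bound half (existence of $\sigma_{b,j}$) is the easy direction: Player~$O$ announces $j$, then plays the obvious ``shift-register'' strategy that records the $j$ incoming bits in $\bigo(2^j)$ states (a depth-$j$ binary tree of memory states), replays them, and thereafter plays arbitrarily while Player~$I$ counts; this realizes $\varphi_b$ with respect to some $k = c\cdot 2^{b-j} + \bigo(j) \in \bigo(2^{b-j})$ because the counter deterministically reaches all-ones in $2^{b-j}-1$ steps regardless of $O$'s moves. So $(n,k)\in\realparms(\varphi_b)$ for $n\in\bigo(2^j)$, $k\in\bigo(2^{b-j})$. The main work is the lower bound: I must show that any strategy realizing $\varphi_b$ with respect to that particular $k$ has size at least $2^j$, i.e.\ that $(2^j - 1, k)\notin\realparms(\varphi_b)$, and symmetrically that no smaller $k$ is achievable with $n$ states. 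For the memory lower bound I would use a fooling-set / Myhill–Nerode-style argument: Player~$I$ can force Player~$O$ down the branch where $O$ announced $j$ (if $O$ always announces some $j' \neq j$, consider the corresponding different point on the frontier); then among the $2^j$ distinct $j$-bit strings Player~$I$ might present, a strategy with fewer than $2^j$ memory states must, by pigeonhole, reach the same memory state after two distinct strings $s \neq s'$, whence it makes the same sequence of phase-3-onward moves after both, so it copies at least one of them incorrectly — violating $\varphi_b$. This forces $n \geq 2^j$. For the bound lower bound, I would observe that Player~$I$ can choose to count fully, making phase~3 last exactly $2^{b-j}-1$ steps, so the $\Fp$-conjunct forces $k \geq 2^{b-j}-1$; thus no strategy (of any size) realizing $\varphi_b$ after the $j$-announcement does so with a smaller bound. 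Combining, the point $(n,k)$ with $n = 2^j$, $k = 2^{b-j}-1$ (or the minimal such pair) is a Pareto position: decreasing $n$ below $2^j$ is impossible, and decreasing $k$ below $2^{b-j}-1$ with $n$ states is impossible.

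The step I expect to be the main obstacle is making the cross-checking conjuncts work within the $\card{\varphi_b}\in\bigo(b)$ budget while keeping the three phases cleanly separable. The delicate points are: (1) enforcing ``Player~$I$'s phase-2 string has length exactly $j$'' and ``Player~$O$'s phase-3 copy is bit-for-bit correct'' using only $\bigo(b)$ subformulas — the natural enumeration over the $b+1$ values of $j$ gives $\bigo(b)$ conjuncts, each of constant size, which is fine, but one must be careful that Player~$O$'s strategy cannot exploit an ill-formed phase-2 move by Player~$I$ to escape the copying obligation (so the specification should let $O$ win outright if $I$ deviates from the protocol, using an implication antecedent ``$I$ plays a legal $j$-bit string''); and (2) ensuring the counter and copying obligations do not interact, e.g.\ by using disjoint proposition sets and a phase-marker so that the $\Fp$ only ever refers to the counter-completion event. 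A secondary subtlety is handling the announcement encoding: if $j$ is given in binary, the cross-checks referencing ``the position $j$ steps later'' need the formula to decode $j$, which again is done by a $\bigo(b)$-sized case split; if $j$ is given in unary over $\bigo(b)$ positions that is even more direct but uses $\bigo(b)$ propositions, which is still consistent with $\card{\varphi_b}\in\bigo(b)$. Once these encoding details are pinned down, the upper and lower bounds follow by the routine arguments sketched above.
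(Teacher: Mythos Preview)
Your proposal is correct and follows essentially the same approach as the paper's proof: the paper constructs exactly this three-phase game (unary announcement of $j$ by Player~$O$, a $j$-bit word from Player~$I$ that Player~$O$ must echo, then a $(b-j)$-bit counter run by Player~$I$ with $\Fp$ attached to its termination), organizes the specification as a disjunction over $j \in [0;b]$ of implications of the form ``Player~$I$ obeys protocol $\Rightarrow$ Player~$O$ copies correctly $\wedge\ \Fp(\text{counter done})$'', and proves the memory lower bound by the same pigeonhole argument on the $2^j$ possible input words and the bound lower bound by the forced counter length. Your identification of the delicate points (letting $O$ win outright on $I$'s protocol violations, keeping phases disjoint, the case split over~$j$) matches the paper's handling; the only cosmetic difference is that the paper commits to the unary encoding of $j$ rather than leaving it optional.
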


\begin{proof}
	We construct an $\ltl$ formula $\varphi_b$ that specifies the following game $\game_b$ for a given $b$ with $P = I \cup O$, where $I = \set{\imark, \idelim}$ and $O = \set{\omark, \odelim}$.

	The game begins with Player~$O$ playing some number~$0 \leq j \leq b$ in unary encoding, i.e., she plays~$j$ times her proposition \omark and ends this encoding by playing~\odelim.
	After this first \odelim, Player~$I$ plays the binary encoding of some number $0 \leq n < 2^j$ using~$j$ positions, and finishes with a \idelim.
	After Player~$I$ has issued his \idelim, Player~$O$ must repeat his sequence and finish with \odelim.
	When Player~$O$ has finished repeating Player~$I$'s sequence, Player~$I$ must implement a binary counter with $b - j$ bits, starting with the binary encoding of $0$.
	Two consecutive values of the counter must be delimited by \idelim, and after encoding $2^{b-j} - 1$, Player~$I$ must play $\idelim\idelim$. 
	During the respective other player's turn, both players always have to play the empty set.
	If either player does not conform to the rules of this game, she loses.
	A play of this game is illustrated in Figure~\ref{fig:game_tradeoffs}.
	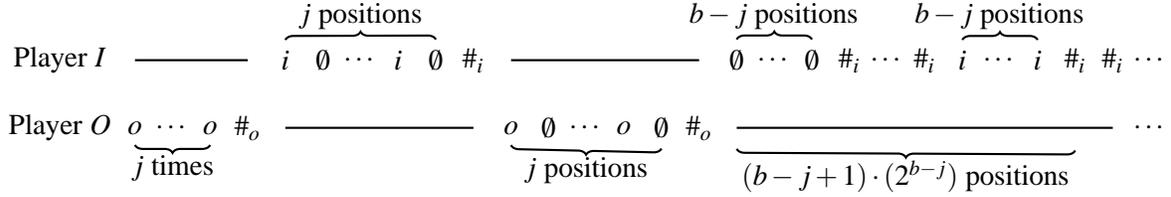
\begin{figure}
	\centering
	\begin{tikzpicture}[thick,yscale=.9]
		\node (player-1-label) at (0,0) {Player $I$};
		\node (player-0-label) at (0,-1) {Player $O$};
		
		\node (left) at (1,-1) {\omark};
		\node at (1.5,-1) {$\cdots$};
		\node (right) at (2,-1) {\omark};
		\node at (2.5,-1) {\odelim};
		
		\path[draw] (1,0) -- (2.5,0);
		
		\draw[decorate,decoration={brace,amplitude=3pt}] (right.south) -- node [anchor=north] {$j$ times} (left.south);
		
		\node (left) at (3,0) {\imark};
		\node at (3.5,0) {$\emptyset$};
		\node at (4,0) {$\cdots$};
		\node at (4.5,0) {\imark};
		\node (right) at (5,0) {$\emptyset$};
		\node at (5.5,0) {\idelim};
		
		\path[draw] (3,-1) -- (5.5,-1);
		
		\draw[decorate,decoration={brace,amplitude=3pt}] (left.north) -- node [anchor=south] {$j$ positions} (right.north);
		
		\node (left) at (6,-1) {\omark};
		\node at (6.5,-1) {$\emptyset$};
		\node at (7,-1) {$\cdots$};
		\node (right) at (7.5,-1) {\omark};
		\node (right) at (8,-1) {$\emptyset$};
		\node at (8.5,-1) {\odelim};
		
		\path[draw] (6,0) -- (8.5,0);
		
		\draw[decorate,decoration={brace,amplitude=3pt}] (right.south) -- node [anchor=north] {$j$ positions} (left.south);
		
		\node (left) at (9,0) {$\emptyset$};
		\node at (9.5,0) {$\cdots$};
		\node (right) at (10,0) {$\emptyset$};
		\node at (10.5,0) {\idelim};
		
		\draw[decorate,decoration={brace,amplitude=3pt}] (left.north) -- node [anchor=south] {$b-j$ positions} (right.north);
		
		\node at (11,0) {$\cdots$};
		\node at (11.5,0) {\idelim};
		\node (left) at (12,0) {\imark};
		\node at (12.5,0) {$\cdots$};
		\node (right) at (13,0) {\imark};
		\node at (13.5,0) {\idelim};
		\node at (14,0) {\idelim};
		
		\draw[decorate,decoration={brace,amplitude=3pt}] (left.north) -- node [anchor=south] {$b-j$ positions} (right.north);
		
		\node (left) at (9,-1) {\textcolor{white}{$\emptyset$}};
		\node (right) at (13.5,-1) {\textcolor{white}{\odelim}};
		\path[draw] (9,-1) -- (14,-1);
		
%		\foreach \x in {8.5,9,...,11.5}
%			\node at (\x,-1) {$\emptyset$};
			
		\draw[decorate,decoration={brace,amplitude=3pt}] (right.south) -- node [anchor=north] {$(b-j+1)\cdot(2^{b-j})$ positions} (left.south);
		
		\node at (14.5,0) {$\cdots$};
		\node at (14.5,-1) {$\cdots$};
		
%		\node at (0,-2.5) {Position};
%		\node at (1,-2.5) {$0$};
%		\node at (2.5,-2.5) {$j$};
%		\node at (5.5,-2.5) {$2j+1$};
%		\node at (8.5,-2.5) {$3j+2$};
%		\node at (13.5,-2.5) {$3j+2 + (b-j+1)\cdot(2^{b-j})$};

	\end{tikzpicture}
	\caption{A play of the game $\game_b$. Sequences of $\emptyset$ are denoted by black lines for readability.}
	\label{fig:game_tradeoffs}	
	\end{figure}

	Towards a formal definition of $\varphi_b$, fix some $j$ with $0 \leq j \leq b$.
	We construct formulas $\varphi^\pick_{b,j}$, $\varphi^I_{b,j}$ and $\varphi^O_{b,j}$ that encode the fact that Player~$O$ starts by announcing the number $j$ in unary encoding, and assumptions about the behavior of both players in $\game_{b}$, respectively, if Player~$O$ starts by announcing~$j$.

	The formula $\varphi^\pick_{b,j}$ is trivial to construct in linear size in $j$ using nested $\X$-operators, as it just argues about a prefix of length $j+1$ of the resulting play.
	The formula $\varphi^I_{b,j}$ encodes the following assumptions about the behavior of Player~$I$:
	\begin{enumerate}
		\item At any time, Player~$I$ picks either \imark or \idelim, or neither, but never both,
		\item Player~$I$ plays $\emptyset$ until Player $O$ plays \odelim for the first time,
		\item immediately after the first position where Player~$O$ plays $\odelim$, Player~$I$ does not pick \idelim for $j$ positions, but does pick it after $j$ turns,
		\item after Player~$I$ has played \idelim for the first time, he plays $\emptyset$ until Player~$O$ has played \odelim again,
		\item immediately after Player~$O$ has played \odelim for the second time, Player~$I$ plays $\emptyset$ for $b-j$ turns, followed by \idelim,
		\item whenever Player~$I$ plays \idelim after the first time he has done so, if the $b-j$ positions preceding that \idelim encode some $\ell \in [0, 2^{b-j}-1)$ in binary using the proposition~\imark, the $b-j$ positions succeeding that \idelim encode $\ell+1$ and are followed directly by another \idelim, and
		\item if and when Player~$I$ encodes $2^{b-j}-1$ at some point after he has played his first \idelim, this encoding is directly followed by $\idelim\idelim$.
	\end{enumerate}
	These properties can be specified using polynomial-length $\ltl$ formulas in both $b$ and $j$.
	In particular, the correct behavior of the $(b-j)$-bit counter can be specified using a formula of polynomial size in $(b-j)$ using standard constructions.
	Thus, we obtain the formula $\varphi^I_{b,j}$ of polynomial size in $b$ and $j$.
	
	Similarly, the formula $\varphi^O_{b, j}$ encodes the following guarantees that Player~$O$ has to ensure, if the assumptions regarding the play of Player~$I$ are met:
	\begin{enumerate}
		\item At any time, Player~$O$ plays either \omark or \odelim, or neither, but never both,
%		\item Player~$O$ starts the game by playing $O$ for $j$ times, directly followed by a \odelim,
		\item after playing \odelim for the first time, Player~$O$ only plays $\emptyset$, until Player~$I$ plays a \idelim,
		\item if Player~$I$ has played some word $w \in (\set{\imark} + \emptyset)^j$ directly preceding his first \idelim, then Player~$O$ must play $w$ with every \imark replaced by an \omark immediately after Player~$I$ has played his first \idelim, and
		\item after her second \odelim, Player~$O$ exclusively plays $\emptyset$.
	\end{enumerate}
	Again, all these properties only argue about infixes of linear size in $j$ and can all be specified using formulas of polynomial size in $b$ and $j$.
	Hence, we obtain a formula $\varphi^O_{b, j}$ that specifies all these guarantees, which is again of size polynomial in $b$ and $j$.
	
	Using the formulas $\varphi^\pick_{b,j}$, $\varphi^I_{b, j}$, and $\varphi^O_{b,j}$ we then define $\varphi_b = \bigvee_{0 \leq j \leq b} \varphi^\pick_{b,j} \land (\varphi^I_{b,j} \rightarrow \varphi^O_{b, j} \land \Fp( \idelim \land \X \idelim)),$ which formally denotes the requirement that Player~$O$ starts by playing some $j$ in unary encoding, and, if Player~$I$ satisfies the assumptions about his behavior in this situation, then Player~$O$ fulfills the requirements to her part of the play, and that Player~$I$ promptly plays $\idelim\idelim$, i.e., promptly finishes counting.
	
	We now show that for each $j$ in $[0;b]$, there exist an $n \in \bigo(2^j)$ and a $k \in \bigo(2^{b-j})$ such that $(n, k)$ is a Pareto position.
	To this end, fix some $j$ with $0 \leq j \leq b$.
	% $\card{\sigma} = 3 \cdot 2^j - 3$
	%	We show the memory structure implementing this strategy in Figure~\ref{fig:memory_struct} as a Mealy automaton, i.e., each transition is labeled with a pair $\nicefrac{P}{P'}$, which denotes that, if Player~$I$ plays $P$ in the current state, Player~$O$ plays $P'$ and updates the memory to the state at the end of the transition.
	Clearly, Player~$O$ has a strategy $\sigma_{b,j}$ with $\card{\sigma_{b,j}} \in \bigo(2^j)$ that realizes $\varphi$ with respect to some $k \in \bigo(2^{b-j})$.
	Intuitively, Player~$O$ first uses $j+1$ memory states as a unary counter up to $j$.
	She plays \omark until this counter reaches $j$, which happens after $\bigo(j)$ steps.
	Once the counter has reached $j$, Player~$O$ plays \odelim and stores the number encoded by Player~$I$ using $\bigo(2^j)$ memory states, which again takes $\bigo(j)$ steps.
%	Here, she may reuse the states used to implement the counter in the first part of $\game_b$, as she can distinguish the different phases by the choices of Player~$I$.
	After Player~$I$ has played \idelim, Player~$O$ repeats the encoding of the number played by Player~$I$, again using $\bigo(2^j)$ memory states and $\bigo(j)$ steps.	
	Afterwards, Player~$O$ plays a single \odelim followed by $\emptyset$ ad infinitum. %, using the assumptions about Player~$I$'s behavior.
	Player~$I$ then implements a binary counter with $b-j$ bits and has to play $\idelim\idelim$ after that counter has reached its maximal value.
	This occurs after $\bigo((b-j)\cdot 2^{b-j})$ steps.
%	Hence, this strategy requires $(2^{j+1} -1) + (2^j -1)  = 3 \cdot 2^j -j 3 \in \bigo(2^j)$ memory states and realizes $\varphi_b$ with respect to $\bigo(2^{b-j})$.
	Hence, this strategy requires $\bigo(j + 2^j + 2^j) = \bigo(2^j)$ memory states and realizes $\varphi_b$ with respect to some $k \in \bigo(3j + (b-j) \cdot 2^{b-j}) = \bigo(2^{b-j})$.

	It remains to show that $\bigo(2^j)$ is a lower bound on the size of any strategy that realizes some bound $k \in \bigo(2^{b-j})$ and that $\bigo(2^{b-j})$ is a lower bound on the parameter $k$ with respect to which a strategy with size in $\bigo(2^j)$ can realize $\phi_b$.
	First, assume that there exists a strategy $\sigma'_{b,j}$ with $\card{\sigma'_{b,j}} \in \smallo(2^j)$ that realizes $\varphi_b$ with respect to some $k \in \bigo(2^{b-j})$.
	Then, there must exist two numbers $0 \leq \ell < \ell' < 2^j$ such that $\sigma'_{b, j}$ ends up in the same state after the two plays $\omark^j\odelim\textsc{bin}_j(\ell)\idelim$ and $\omark^j\odelim\textsc{bin}_j(\ell')\idelim$, where $\textsc{bin}_j(\ell)$ denotes the encoding of $\ell$ in binary using $j$ bits (encoded by $\imark$).
	Hence, Player~$O$ cannot differentiate between $\ell$ and $\ell'$ and does not ensure her guarantees in one of the two cases.
	Thus, $\sigma'_{b,j}$ does not realize $\varphi_b$.
		
	Moreover, it is clear that, due to the strict structure of the game, Player~$O$ cannot force the occurrence of $\idelim\idelim$ in $\smallo(2^{b-j})$ steps using a memory structure of size $\bigo(2^j)$.
	The only way for her to force Player~$I$ to play $\idelim\idelim$ after less than $\bigo(2^{b-j})$ steps is to play some number $j' > j$ at the beginning of the game.
	Doing so, however, would give Player~$I$ $j'$ bits to encode some number at the beginning of the second part of the game, which in turn would require Player~$O$ to use $\bigo(2^{j'})$ memory states to store and repeat this number, as argued before.
\end{proof}

We observe that each $\phi_b$ has linearly many Pareto positions in $b$, where the extremal values in $\realparms(\phi_b)$ are $(n, k)$ for $n \in \bigo(1)$, $k \in \bigo(2^b)$ and $(n',k')$ for $n' \in \bigo(2^b)$ and $k' \in \bigo(b)$.
In order to show that the distance between $n$ and $k$ may even become doubly-exponential, we move from the continuous tradeoff exhibited by the previous theorem to a discrete tradeoff, i.e., for each $b$ we provide a formula $\phi_b$ such that there are two ways to realize $\phi_b$;
Either, Player~$O$ realizes this formula with respect to some constant bound, but requires doubly-exponential memory to do so, or she realizes it with respect to some exponential bound, but can do so by using only constant memory.

These bounds are obtained by letting Player~$O$ choose between one of two games, in which Player~$I$ has to implement either a doubly- or singly-exponentially bounded counter.
In the former case, this realization is formalized by an $\ltl$ formula, hence the specification is trivially realized with respect to $k=0$.
Player~$O$ does, however, require doubly-exponential memory to denote errors in Player~$I$'s implementation of the counter~\cite{PnueliRosner89}.
In the latter case, Player~$O$ does not require any memory, but the specification requires that Player~$I$ finishes counting promptly.
Hence, the specification is only fulfilled with respect to an exponential bound.

\begin{theorem}
  For each $b \in \nats$ there exists a $\prompt$ formula $\varphi_b$ with $\card{\varphi_b} \in \bigo(b)$ such that there exist $n \in \bigo(2^{2^b})$, $n' \in \bigo(1)$, and $k' \in \bigo(2^b)$, such that both $(n, 0)$ and $(n', k')$ are Pareto positions of $\phi_b$.
\end{theorem}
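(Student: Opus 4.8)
The plan is to let Player~$O$ commit, with her first move, to one of two subgames via a fresh output proposition~$c$, and to take
\[ \varphi_b \coloneq (c \rightarrow \psi^A_b) \wedge (\neg c \rightarrow \psi^B_b), \]
where $\psi^A_b$ is an $\ltl$ formula, $\psi^B_b$ a $\prompt$ formula, and $\card{\psi^A_b}, \card{\psi^B_b} \in \bigo(b)$. Subgame~$A$ is the classical gadget behind the doubly-exponential lower bound for $\ltl$ realizability: Player~$I$ is required to write an infinite run of a $2^b$-bit binary counter --- configurations of $2^b$ bits, spread over $2^b$ consecutive positions with the help of a $b$-bit position counter in the standard encoding and separated by $\idelim$ --- starting at the all-zero configuration, and Player~$O$ must raise a flag at a configuration boundary exactly if that configuration is not the correct successor of the preceding one. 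I would take $\psi^A_b$ essentially verbatim from the construction witnessing the $\twoexp$ lower bound for $\ltl$ realizability~\cite{PnueliRosner89}, so that $\card{\psi^A_b} \in \bigo(b)$, $\psi^A_b$ is realized by the obvious strategy that buffers the current configuration (which has $\bigo(2^{2^b})$ states), and every realizing strategy has size $2^{\Omega(2^b)}$, because after a configuration Player~$O$'s memory state must determine all of its $2^b$ bits. Crucially, as $\psi^A_b$ is an $\ltl$ formula, a strategy that plays~$c$ at the first position and realizes $\psi^A_b$ realizes $\varphi_b$ with respect to every bound, in particular with respect to $k = 0$.

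Subgame~$B$ is a ``finish a small counter promptly'' gadget: Player~$I$ is required to write the run of a $b$-bit binary counter starting at~$0$ at the first position and, immediately after reaching the maximal value~$2^b-1$, to play $\idelim\idelim$; if he deviates he loses. I would encode this as
\[ \psi^B_b \coloneq \neg \alpha^B_b \vee \Fp(\idelim \wedge \X \idelim), \]
with $\alpha^B_b$ an $\ltl$ formula of size $\bigo(b)$ stating these obligations, so that Player~$O$ has nothing to do in subgame~$B$. If Player~$I$ conforms, the double delimiter occurs exactly after the fixed number $N_b \in \bigo(2^b)$ of steps needed to traverse all $2^b$ configurations, so that $\varphi_b$ is realized with respect to $k = N_b$ --- and with respect to no smaller bound --- by the strategy that plays~$\neg c$ and then $\emptyset$ forever, which has size $\bigo(1)$; if Player~$I$ deviates, the disjunct $\neg\alpha^B_b$ holds.

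The two Pareto positions then drop out by a short argument on the shape of $\realparms(\varphi_b)$. A strategy realizing $\varphi_b$ with respect to $k = 0$ cannot have committed to subgame~$B$, since a strategy playing~$\neg c$ must realize $\psi^B_b$ and the disjunct $\Fp(\idelim \wedge \X \idelim)$ is not satisfied with respect to $k=0$ when Player~$I$ conforms (the first position is not a delimiter); hence such a strategy plays~$c$ and realizes $\psi^A_b$, and therefore the least $n$ with $(n,0) \in \realparms(\varphi_b)$ is the smallest size of a strategy realizing $\psi^A_b$, which lies in $\bigo(2^{2^b})$. This $(n,0)$ is a Pareto position because $(n-1,0) \notin \realparms(\varphi_b)$ and no bound is below~$0$. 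Symmetrically, let $n' \in \bigo(1)$ be the size of the do-nothing strategy for subgame~$B$; for all but finitely many~$b$, $n'$ is strictly below the $2^{\Omega(2^b)}$ lower bound for $\psi^A_b$, so every strategy of size~$n'$ is forced into subgame~$B$ and hence realizes $\varphi_b$ only with respect to bounds $k \ge N_b$. Thus $(n', N_b) \in \realparms(\varphi_b)$ while $(n'-1,N_b) \notin \realparms(\varphi_b)$ and $(n', N_b-1) \notin \realparms(\varphi_b)$, so $(n', k')$ with $k' = N_b \in \bigo(2^b)$ is a Pareto position as well.

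The main obstacle is a borrowed one: showing that $\psi^A_b$ requires doubly-exponential memory while having size only $\bigo(b)$. I would not reprove this but import it from the $\twoexp$ lower-bound construction for $\ltl$ realizability~\cite{PnueliRosner89}. Everything else --- the $\bigo(b)$-size $\ltl$ encodings of the two counter behaviours (both are routine ``binary counter'' specifications of the kind already used in the proof of Theorem~\ref{thm:continuous_tradeoff}), the commitment mechanism via~$c$, and reading off the Pareto positions --- is standard.
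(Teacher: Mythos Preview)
Your proposal is correct and follows essentially the same approach as the paper: a fresh output proposition lets Player~$O$ commit either to the Pnueli--Rosner $\ltl$ gadget (doubly-exponential memory, bound~$0$) or to a ``Player~$I$ counts to $2^b$ then emits a delimiter, which must occur promptly'' gadget (constant memory, exponential bound). Your write-up in fact gives more detail than the paper on why the two resulting points are Pareto positions; the only cosmetic difference is that the paper inserts an $\X$ after the commitment, which is inessential.
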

\begin{proof}
  Let $b \in \nats$.
  We give a realizable $\prompt$ formula $\varphi_b$ that exhibits the stated tradeoff.
  Let $\psi$ be a realizable $\ltl$ formula with $\card{\psi} \in \bigo(b)$ where each strategy realizing $\psi$ has at least doubly-exponentially many states in~$b$.
  Let $\psi'$ be a $\prompt$ formula with $\card{\psi'} \in \bigo(b)$ that is realizable with respect to $k \in O(2^{\card{\psi'}})$ and constant strategy size.
  We construct $\varphi$ to be $(o \rightarrow \X\psi) \land (\neg o \rightarrow \X\psi')$ where $o$ is a fresh atomic proposition controlled by Player~$O$.
  Player~$O$ decides in the first step whether she wants to satisfy the $\ltl$ formula $\psi$ or the $\prompt$ formula $\psi'$.
  Given $\psi$ and $\psi'$, it is trivial to verify that the stated properties hold.
  
  It remains to show that such formulas $\psi$ and $\psi'$ exist.
  It is known that a $\ltl$ formula $\psi$ with the required properties exists~\cite{PnueliRosner89}.
  Intuitively, $\psi$ requires Player~$I$ to implement a binary counter with exponentially many bits in $b$, which counts up to $2^{2^b}$.
  The task of Player~$O$ is to mark errors in Player~$I$'s implementation of the counter, for which she requires doubly-exponential memory in $b$.
  
  The $\prompt$ formula $\psi'$ requires Player~$I$ to implement a binary counter, similarly to the latter phase of the game $\game_b$ constructed in the proof of Theorem~\ref{thm:continuous_tradeoff}.
  After Player~$I$ has counted up to $2^b$, he plays some delimiter $\#$.
  Then the formula $\psi'$ is of the form $\psi_\textrm{count} \rightarrow \Fp\#$, where $\psi_\textrm{count}$ specifies the assumption that Player~$I$ implements the binary counter correctly and finishes with a $\#$.
  Clearly, Player~$O$ can realize this formula with a  strategy of size one, but she cannot enforce a realization with respect to some bound $k \in \smallo(2^b)$.
% 
%  Intuitively, the formula $\psi'$ describes a game where the environment player outputs a sequence of a binary counter.
%  We parameterize $\psi'$ with a bound $b$ on the number of counter bits.
%  Given a $\ltl$ formula $\psi_\mathit{counter}$ that encodes the counter, the formula $\psi'$ is $\psi_\mathit{counter} \rightarrow \Fp (\psi_\mathit{error} \lor \psi_\mathit{overflow})$, i.e., the system player wins if the environment makes an error while counting or the counter overflows.
%  All three $\ltl$ formulas are linear in the number of bits $b$ and the environment cannot win the game.
%  However, it can stretch the bound $k$ as long as the counter sequence, i.e., $k \in O(2^b)$.
\end{proof}

%%%%%%%%%%%%%%%%%%%%%%%%%%%%%%%%%%%%%%%%%%%%%%%%%%%%
%%%%%%%%%%%%%%%%%%%%%%%%%%%%%%%%%%%%%%%%%%%%%%%%%%%%
%\section{Beyond Prompt-LTL}
%\label{subsection_beyond}
%\input{beyond}

%%%%%%%%%%%%%%%%%%%%%%%%%%%%%%%%%%%%%%%%%%%%%%%%%%%%%
%%%%%%%%%%%%%%%%%%%%%%%%%%%%%%%%%%%%%%%%%%%%%%%%%%%%%
\section{Conclusion}
\label{section_conclusion}
In this work, we presented an approximation algorithm for the $\prompt$ realizability problem with doubly-exponential running time with an approximation factor of two. This is an exponential improvement over the fastest known exact algorithms. The algorithm relies on repeated calls to an $\ltl$ realizability solver. We have implemented the algorithm using \bosy as $\ltl$ realizability solver, which implements the bounded synthesis approach. In our proof-of-concept experiments, a tradeoff between the size and the quality of a strategy becomes apparent, which we investigated: we proved upper bounds on the tradeoff, which reduces the search space of our algorithm, and proved matching lower bounds.

Although we presented our results only for $\prompt$, they also hold for the more expressive logics~$\pltl$~\cite{AlurEtessamiLaTorrePeled01} and $\pldl$~\cite{FaymonvilleZimmermann14}, as they can be compiled into Büchi automata of exponential size and as the alternating color technique is applicable to them as well.

There are several open problems to consider in future work. Most importantly, the computational complexity of the exact optimization problem is still open. Similarly, the exact memory requirements of optimal strategies are open: triply-exponential memory is always sufficient~\cite{Zimmermann13}, but it is open whether doubly-exponential memory suffices as well, as it does for $\ltl$ specifications. Other open problems relate to the tradeoffs: we have studied the tradeoff between size and quality of strategies. One can also consider tradeoffs between different parameters in $\pltl$ and $\pldl$ formulas or take the running time into account as well. The former problem is tightly related to the study of the solution space, i.e., the space of the realizable parameter valuations (see~\cite{AlurEtessamiLaTorrePeled01} for results on the model checking).

%%%%%%%%%%%%%%%%%%%%%%%%%%%%%%%%%%%%%%%%%%%%%%%%%%%%%
%%%%%%%%%%%%%%%%%%%%%%%%%%%%%%%%%%%%%%%%%%%%%%%%%%%%%
\bibliographystyle{eptcs}
\bibliography{biblio}

\begin{thebibliography}{10}
\providecommand{\bibitemdeclare}[2]{}
\providecommand{\surnamestart}{}
\providecommand{\surnameend}{}
\providecommand{\urlprefix}{Available at }
\providecommand{\url}[1]{\texttt{#1}}
\providecommand{\href}[2]{\texttt{#2}}
\providecommand{\urlalt}[2]{\href{#1}{#2}}
\providecommand{\doi}[1]{doi:\urlalt{http://dx.doi.org/#1}{#1}}
\providecommand{\bibinfo}[2]{#2}

\bibitemdeclare{inproceedings}{AbadiLamportWolper89}
\bibitem{AbadiLamportWolper89}
\bibinfo{author}{Mart{\'{\i}}n \surnamestart Abadi\surnameend},
  \bibinfo{author}{Leslie \surnamestart Lamport\surnameend} \&
  \bibinfo{author}{Pierre \surnamestart Wolper\surnameend}
  (\bibinfo{year}{1989}): \emph{\bibinfo{title}{Realizable and Unrealizable
  Specifications of Reactive Systems}}.
\newblock In \bibinfo{editor}{Ausiello} et~al.  \cite{DBLP:conf/icalp/1989},
  pp. \bibinfo{pages}{1--17}, \doi{10.1007/BFb0035748}.

\bibitemdeclare{article}{AlurEtessamiLaTorrePeled01}
\bibitem{AlurEtessamiLaTorrePeled01}
\bibinfo{author}{Rajeev \surnamestart Alur\surnameend}, \bibinfo{author}{Kousha
  \surnamestart Etessami\surnameend}, \bibinfo{author}{Salvatore \surnamestart
  {La Torre}\surnameend} \& \bibinfo{author}{Doron \surnamestart
  Peled\surnameend} (\bibinfo{year}{2001}): \emph{\bibinfo{title}{Parametric
  temporal logic for ``model measuring''}}.
\newblock {\sl \bibinfo{journal}{ACM Trans. Comput. Log.}}
  \bibinfo{volume}{2}(\bibinfo{number}{3}), pp. \bibinfo{pages}{388--407},
  \doi{10.1145/377978.377990}.

\bibitemdeclare{proceedings}{DBLP:conf/icalp/1989}
\bibitem{DBLP:conf/icalp/1989}
\bibinfo{editor}{Giorgio \surnamestart Ausiello\surnameend},
  \bibinfo{editor}{Mariangiola \surnamestart Dezani-Ciancaglini\surnameend} \&
  \bibinfo{editor}{Simona Ronchi~Della \surnamestart Rocca\surnameend}, editors
  (\bibinfo{year}{1989}): \emph{\bibinfo{title}{ICALP 89}}. {\sl
  \bibinfo{series}{LNCS}} \bibinfo{volume}{372}, \bibinfo{publisher}{Springer}.

\bibitemdeclare{inproceedings}{BohyBruyereFiliotJinRaskin}
\bibitem{BohyBruyereFiliotJinRaskin}
\bibinfo{author}{Aaron \surnamestart Bohy\surnameend},
  \bibinfo{author}{V{\'e}ronique \surnamestart Bruy{\`e}re\surnameend},
  \bibinfo{author}{Emmanuel \surnamestart Filiot\surnameend},
  \bibinfo{author}{Naiyong \surnamestart Jin\surnameend} \&
  \bibinfo{author}{Jean-Fran\c{c}ois \surnamestart Raskin\surnameend}
  (\bibinfo{year}{2012}): \emph{\bibinfo{title}{Acacia+, a Tool for {LTL}
  Synthesis}}.
\newblock In \bibinfo{editor}{P.~\surnamestart Madhusudan\surnameend} \&
  \bibinfo{editor}{Sanjit~A. \surnamestart Seshia\surnameend}, editors: {\sl
  \bibinfo{booktitle}{CAV}}, {\sl \bibinfo{series}{LNCS}}
  \bibinfo{volume}{7358}, \bibinfo{publisher}{Springer}, pp.
  \bibinfo{pages}{652--657}, \doi{10.1007/978-3-642-31424-7\_45}.

\bibitemdeclare{article}{Ehlers11e}
\bibitem{Ehlers11e}
\bibinfo{author}{R{\"u}diger \surnamestart Ehlers\surnameend}
  (\bibinfo{year}{2012}): \emph{\bibinfo{title}{Symbolic Bounded Synthesis}}.
\newblock {\sl \bibinfo{journal}{Formal Methods in System Design}}
  \bibinfo{volume}{40}(\bibinfo{number}{2}), pp. \bibinfo{pages}{232--262},
  \doi{10.1007/s10703-011-0137-x}.

\bibitemdeclare{inproceedings}{FaymonvilleFinkbeinerRabeTentrup13}
\bibitem{FaymonvilleFinkbeinerRabeTentrup13}
\bibinfo{author}{Peter \surnamestart Faymonville\surnameend},
  \bibinfo{author}{Bernd \surnamestart Finkbeiner\surnameend},
  \bibinfo{author}{Markus~N. \surnamestart Rabe\surnameend} \&
  \bibinfo{author}{Leander \surnamestart Tentrup\surnameend}
  (\bibinfo{year}{2015}): \emph{\bibinfo{title}{Encodings of Reactive
  Synthesis}}.
\newblock In: {\sl \bibinfo{booktitle}{Quantify 2015}}.
\newblock
  \urlprefix\url{http://fmv.jku.at/quantify15/Faymonville-et-al-QUANTIFY2015.pdf}.

\bibitemdeclare{inproceedings}{FaymonvilleZimmermann14}
\bibitem{FaymonvilleZimmermann14}
\bibinfo{author}{Peter \surnamestart Faymonville\surnameend} \&
  \bibinfo{author}{Martin \surnamestart Zimmermann\surnameend}
  (\bibinfo{year}{2014}): \emph{\bibinfo{title}{Parametric Linear Dynamic
  Logic}}.
\newblock In \bibinfo{editor}{Adriano \surnamestart Peron\surnameend} \&
  \bibinfo{editor}{Carla \surnamestart Piazza\surnameend}, editors: {\sl
  \bibinfo{booktitle}{GandALF}}, {\sl \bibinfo{series}{{EPTCS}}}
  \bibinfo{volume}{161}, pp. \bibinfo{pages}{60--73},
  \doi{10.4204/EPTCS.161.8}.

\bibitemdeclare{article}{FiliotJinRaskin11}
\bibitem{FiliotJinRaskin11}
\bibinfo{author}{Emmanuel \surnamestart Filiot\surnameend},
  \bibinfo{author}{Naiyong \surnamestart Jin\surnameend} \&
  \bibinfo{author}{Jean-Fran\c{c}ois \surnamestart Raskin\surnameend}
  (\bibinfo{year}{2011}): \emph{\bibinfo{title}{Antichains and compositional
  algorithms for {LTL} synthesis}}.
\newblock {\sl \bibinfo{journal}{Formal Methods in System Design}}
  \bibinfo{volume}{39}(\bibinfo{number}{3}), pp. \bibinfo{pages}{261--296},
  \doi{10.1007/s10703-011-0115-3}.

\bibitemdeclare{article}{FinkbeinerSchewe13}
\bibitem{FinkbeinerSchewe13}
\bibinfo{author}{Bernd \surnamestart Finkbeiner\surnameend} \&
  \bibinfo{author}{Sven \surnamestart Schewe\surnameend}
  (\bibinfo{year}{2013}): \emph{\bibinfo{title}{Bounded synthesis}}.
\newblock {\sl \bibinfo{journal}{{STTT}}}
  \bibinfo{volume}{15}(\bibinfo{number}{5-6}), pp. \bibinfo{pages}{519--539},
  \doi{10.1007/s10009-012-0228-z}.

\bibitemdeclare{proceedings}{GraedelThomasWilke02}
\bibitem{GraedelThomasWilke02}
\bibinfo{editor}{Erich \surnamestart Gr\"adel\surnameend},
  \bibinfo{editor}{Wolfgang \surnamestart Thomas\surnameend} \&
  \bibinfo{editor}{Thomas \surnamestart Wilke\surnameend}, editors
  (\bibinfo{year}{2002}): \emph{\bibinfo{title}{Automata, Logics, and Infinite
  Games: A Guide to Current Research}}. {\sl \bibinfo{series}{LNCS}}
  \bibinfo{volume}{2500}, \bibinfo{publisher}{Springer}.

\bibitemdeclare{inproceedings}{JobstmannBloem06}
\bibitem{JobstmannBloem06}
\bibinfo{author}{Barbara \surnamestart Jobstmann\surnameend} \&
  \bibinfo{author}{Roderick \surnamestart Bloem\surnameend}
  (\bibinfo{year}{2006}): \emph{\bibinfo{title}{Optimizations for {LTL}
  Synthesis}}.
\newblock In: {\sl \bibinfo{booktitle}{FMCAD}}, \bibinfo{publisher}{IEEE
  Computer Society}, pp. \bibinfo{pages}{117--124},
  \doi{10.1109/FMCAD.2006.22}.

\bibitemdeclare{inproceedings}{KupfermanPitermanVardi06}
\bibitem{KupfermanPitermanVardi06}
\bibinfo{author}{Orna \surnamestart Kupferman\surnameend}, \bibinfo{author}{Nir
  \surnamestart Piterman\surnameend} \& \bibinfo{author}{Moshe~Y. \surnamestart
  Vardi\surnameend} (\bibinfo{year}{2006}): \emph{\bibinfo{title}{Safraless
  Compositional Synthesis}}.
\newblock In \bibinfo{editor}{Thomas \surnamestart Ball\surnameend} \&
  \bibinfo{editor}{Robert~B. \surnamestart Jones\surnameend}, editors: {\sl
  \bibinfo{booktitle}{CAV}}, {\sl \bibinfo{series}{LNCS}}
  \bibinfo{volume}{4144}, \bibinfo{publisher}{Springer}, pp.
  \bibinfo{pages}{31--44}, \doi{10.1007/11817963\_6}.

\bibitemdeclare{article}{KupfermanPitermanVardi09}
\bibitem{KupfermanPitermanVardi09}
\bibinfo{author}{Orna \surnamestart Kupferman\surnameend}, \bibinfo{author}{Nir
  \surnamestart Piterman\surnameend} \& \bibinfo{author}{Moshe~Y. \surnamestart
  Vardi\surnameend} (\bibinfo{year}{2009}): \emph{\bibinfo{title}{From liveness
  to promptness}}.
\newblock {\sl \bibinfo{journal}{Formal Methods in System Design}}
  \bibinfo{volume}{34}(\bibinfo{number}{2}), pp. \bibinfo{pages}{83--103},
  \doi{10.1007/s10703-009-0067-z}.

\bibitemdeclare{inproceedings}{KupfermanVardi05}
\bibitem{KupfermanVardi05}
\bibinfo{author}{Orna \surnamestart Kupferman\surnameend} \&
  \bibinfo{author}{Moshe~Y. \surnamestart Vardi\surnameend}
  (\bibinfo{year}{2005}): \emph{\bibinfo{title}{Safraless Decision
  Procedures}}.
\newblock In: {\sl \bibinfo{booktitle}{FOCS}}, \bibinfo{publisher}{IEEE
  Computer Society}, pp. \bibinfo{pages}{531--542}, \doi{10.1109/SFCS.2005.66}.

\bibitemdeclare{inproceedings}{Pnueli77}
\bibitem{Pnueli77}
\bibinfo{author}{Amir \surnamestart Pnueli\surnameend} (\bibinfo{year}{1977}):
  \emph{\bibinfo{title}{The Temporal Logic of Programs}}.
\newblock In: {\sl \bibinfo{booktitle}{FOCS 1977}}, \bibinfo{publisher}{{IEEE}
  Computer Society}, pp. \bibinfo{pages}{46--57}, \doi{10.1109/SFCS.1977.32}.

\bibitemdeclare{inproceedings}{PnueliRosner89}
\bibitem{PnueliRosner89}
\bibinfo{author}{Amir \surnamestart Pnueli\surnameend} \& \bibinfo{author}{Roni
  \surnamestart Rosner\surnameend} (\bibinfo{year}{1989}):
  \emph{\bibinfo{title}{On the Synthesis of a Reactive Module}}.
\newblock In: {\sl \bibinfo{booktitle}{POPL}}, pp. \bibinfo{pages}{179--190},
  \doi{10.1145/75277.75293}.

\bibitemdeclare{inproceedings}{PnueliRosner89a}
\bibitem{PnueliRosner89a}
\bibinfo{author}{Amir \surnamestart Pnueli\surnameend} \& \bibinfo{author}{Roni
  \surnamestart Rosner\surnameend} (\bibinfo{year}{1989}):
  \emph{\bibinfo{title}{On the Synthesis of an Asynchronous Reactive Module}}.
\newblock In \bibinfo{editor}{Ausiello} et~al.  \cite{DBLP:conf/icalp/1989},
  pp. \bibinfo{pages}{652--671}, \doi{10.1007/BFb0035790}.

\bibitemdeclare{inproceedings}{Schewe07}
\bibitem{Schewe07}
\bibinfo{author}{Sven \surnamestart Schewe\surnameend} (\bibinfo{year}{2007}):
  \emph{\bibinfo{title}{Solving Parity Games in Big Steps}}.
\newblock In \bibinfo{editor}{Vikraman \surnamestart Arvind\surnameend} \&
  \bibinfo{editor}{Sanjiva \surnamestart Prasad\surnameend}, editors: {\sl
  \bibinfo{booktitle}{FSTTCS 2007}}, {\sl \bibinfo{series}{LNCS}}
  \bibinfo{volume}{4855}, \bibinfo{publisher}{Springer}, pp.
  \bibinfo{pages}{449--460}, \doi{10.1007/978-3-540-77050-3\_37}.

\bibitemdeclare{article}{Zimmermann13}
\bibitem{Zimmermann13}
\bibinfo{author}{Martin \surnamestart Zimmermann\surnameend}
  (\bibinfo{year}{2013}): \emph{\bibinfo{title}{Optimal bounds in parametric
  {LTL} games}}.
\newblock {\sl \bibinfo{journal}{Theor. Comput. Sci.}} \bibinfo{volume}{493},
  pp. \bibinfo{pages}{30--45}, \doi{10.1016/j.tcs.2012.07.039}.

\bibitemdeclare{inproceedings}{Zimmermann15a}
\bibitem{Zimmermann15a}
\bibinfo{author}{Martin \surnamestart Zimmermann\surnameend}
  (\bibinfo{year}{2015}): \emph{\bibinfo{title}{Parameterized Linear Temporal
  Logics Meet Costs: Still not Costlier than {LTL}}}.
\newblock In \bibinfo{editor}{Javier \surnamestart Esparza\surnameend} \&
  \bibinfo{editor}{Enrico \surnamestart Tronci\surnameend}, editors: {\sl
  \bibinfo{booktitle}{{GandALF} 2015}}, {\sl \bibinfo{series}{{EPTCS}}}
  \bibinfo{volume}{193}, pp. \bibinfo{pages}{144--157},
  \doi{10.4204/EPTCS.193.11}.

\end{thebibliography}

\end{document}